\documentclass[11pt]{article}
\usepackage{fullpage}
\usepackage{graphicx}
\usepackage{amssymb}
\usepackage{amsmath,amsthm}
\usepackage{color}
\usepackage{xspace}
\usepackage{algorithm}
\usepackage[noend]{algorithmic}
\usepackage{enumitem}

\usepackage{ifpdf}
\ifpdf    
\usepackage{hyperref}
\else    
\usepackage[hypertex]{hyperref}
\fi

\usepackage[margin=0.9in]{geometry}

\newcommand{\remove}[1]{}

\def\compactify{\itemsep=0pt \topsep=0pt \partopsep=0pt \parsep=0pt}

\newtheorem{theorem}{Theorem}[section]

\newtheorem{inv}[theorem]{Invariant}

\newtheorem{lemma}[theorem]{Lemma}

\newtheorem{obs}[theorem]{Observation}

\providecommand{\card}[1]{\lvert#1\rvert}

\providecommand{\eqdef}{:=}

\newcommand{\N}{{\mathbb N}}

\newcommand{\comment}[1]{}
\newcommand{\junk}[1]{}

\newcommand{\dg}{d_{\textit{out}}}
\newcommand{\din}{d_{\textit{in}}}
\newcommand{\orient}[2]{#1 \rightarrow #2}
\newcommand{\rt}{s}

\newcommand{\ri}{{\sf Recursive-Insertion}}
\newcommand{\rd}{{\sf Recursive-Deletion}}
\newcommand{\eri}{{\sf Efficient-Recursive-Insertion}}
\newcommand{\erd}{{\sf Efficient-Recursive-Deletion}}

\title{Orienting Fully Dynamic Graphs\\ with Worst-Case Time Bounds}

\author{
Tsvi Kopelowitz\thanks{University of Michigan. This work was done in part while this author was a postdoctorate researcher at the Weizmann Institute of Science, and was supported in part by a US-Israel BSF grant \#2010418,
an Israel Science Foundation grant \#897/13,
and by the Citi Foundation.
Email: \texttt{kopelot@gmail.com}
}
\and
Robert Krauthgamer\thanks{Weizmann Institute of Science.
Work supported in part by a US-Israel BSF grant \#2010418,
an Israel Science Foundation grant \#897/13,
and by the Citi Foundation.
Email: \texttt{robert.krauthgamer@weizmann.ac.il}
}
\and
Ely Porat\thanks{Bar-Ilan University. Email: \texttt{porately@cs.biu.ac.il}}
\and
Shay Solomon\thanks{Weizmann Institute of Science.
Email: \texttt{shay.solomon@weizmann.ac.il}. This work is supported by the Koshland Center for basic Research.}
}


\begin{document}

\maketitle

\begin{abstract}

In edge orientations, the goal is usually to orient (direct) the edges of
an undirected $n$-vertex graph $G$ such that all out-degrees are bounded.
When the graph $G$ is fully dynamic, i.e., admits edge insertions and deletions,
we wish to maintain such an orientation while keeping a tab on
the update time.
Low out-degree orientations turned out to be a surprisingly useful tool,
with several algorithmic applications involving static or dynamic graphs.

Brodal and Fagerberg (1999) initiated the study of the edge orientation problem
in terms of the graph's arboricity, which is very natural in this context.
They provided a solution
with constant out-degree and \emph{amortized} logarithmic update time
for all graphs with constant arboricity,
which include all planar and excluded-minor graphs.
However, it remained an open question (first proposed by Brodal and Fagerberg, later by others)
to obtain similar bounds with worst-case update time.

We resolve this 15 year old question in the affirmative,
by providing a simple algorithm with worst-case bounds
that nearly match the previous amortized bounds.
Our algorithm is based on a new approach of a combinatorial invariant,
and achieves a logarithmic out-degree with logarithmic worst-case update times.
This result has applications in various dynamic graph problems
such as maintaining a maximal matching, where we obtain $O(\log n)$ worst-case update time compared to the
$O(\frac{\log n}{\log\log n})$ amortized update time of Neiman and Solomon (2013).
\end{abstract}

\section{Introduction}

Low out-degree \emph{orientations} are a very useful tool in designing algorithms.
The idea is to orient (direct) the edges of an undirected graph $G = (V,E)$
while providing a guaranteed upper bound on the out-degree of every vertex.
Formally, a {\em $c$-orientation} refers to
an orientation in which the out-degree of every vertex is at most $c$.
An exciting example of the power of graph orientations can be seen
in the seminal ``color-coding paper'' \cite{AYZ95}, where orientations are used
to develop more efficient algorithms for finding simple cycles and paths.
Another fundamental example can be seen in maintaining data structures for answering
adjacency queries~\cite{CE91,BF99,Kowalik07},
where a $c$-orientation for $G$ is used to quickly answer
adjacency queries on $G$ in $O(c)$ time using only linear space,
and these ideas were further generalized to answer short-path queries~\cite{KK06}.
Additional examples where low-degree orientations were exploited algorithmically
include load balancing~\cite{CSW07}, maximal matchings~\cite{NS13}, counting subgraphs in sparse graphs~\cite{DT13}, prize-collecting TSPs and Steiner Trees~\cite{EKM12}, reporting all maximal independent sets~\cite{Eppstein09}, answering dominance queries~\cite{Eppstein09}, subgraph listing problems (listing triangles and $4$-cliques) in planar graphs \cite{CE91}, computing the girth~\cite{KK06}, and more.

In many contexts, algorithmic efficiency can be improved significantly
if for each edge, one of its endpoints becomes
``responsible'' for data transfer taking place on that edge.
Such a responsibility assignment can be naturally achieved by orienting graph's edges and letting each vertex be responsible only for its outgoing edges.
For example, when we ask a vertex $u$ to compute some function of dynamic data residing locally at $u$'s neighbors, we would like to avoid scanning all of $u$'s neighbors. Given a $c$-orientation, whenever the local data in $u$ changes,
$u$ is responsible to update all its outgoing neighbors (neighbors of $u$ through edges that are oriented out of $u$).
In contrast, $u$ need not update any of its (possibly many) incoming neighbors (neighbors of $u$ through edges that are oriented into $u$) about this change.
When $u$ wishes to compute the function, it only needs to scan its outgoing neighbors in order to get the full up-to-date data.
Such responsibility assignment is particularly useful in dynamic networks (see \cite{NS13} for an example),
but is highly applicable also in other contexts,
such as reducing the message complexity of distributed or self-stabilizing networks,
or reducing local memory constraints in such systems, e.g., a router would only store information about its $c$ outgoing neighbors.

\paragraph{Dynamic Graphs.}
Our focus here is on maintaining low out-degree orientations of fully dynamic graphs on $n$ fixed vertices, where edge updates (insertions and deletions) take place over time. The goal is to develop efficient and simple algorithms that guarantee that the maximum out-degree in the (dynamic) orientation of the graph is small. In particular, we are interested in obtaining non-trivial update times
that hold (1) in the \emph{worst-case}, and (2) \emph{deterministically}.
Notice that in order for an update algorithm to be efficient, the number of
\emph{edge reorientation} (done when performing an edge update) must be small,
as this number is an immediate lower bound for the algorithm's update time.

The maximum out-degree achieved by our algorithms will be expressed in terms
of the sparsity of the graph, as measured by the \emph{arboricity} of $G$
(defined below),
which is (as we shall see) a natural lower bound for the maximum out-degree of any orientation.

\paragraph{Arboricity.}
The \emph{arboricity}
of an undirected graph $G=(V,E)$ is defined as
$\alpha(G) =  \max_{U\subseteq V} \left\lceil \frac{\card{E(U)}}{\card{U}-1} \right\rceil$,
where $E(U)$ is the set of edges induced by $U$
(which we assume has size $\card{U}\ge 2$).
This is a concrete formalism for the notion of everywhere-sparse graphs; for a graph with arboricity at most $\alpha$, every subgraph of this graph has arboricity at most $\alpha$ as well.
The notion of arboricity, as well as other sparseness measurements such as \emph{thickness, degeneracy} or \emph{density} (which are all equal up to constant factors) have been subject to extensive research.
Most notable in this context is the family of graphs with constant arboricity, which includes all excluded-minor graphs, and in particular planar graphs and bounded-treewidth graphs.

A key property of bounded arboricity graphs which has been exploited in various algorithmic applications is the following Nash-Williams Theorem.

\begin{theorem}[Nash-Williams~\cite{NASH61,Nash64}]
A graph $G=(V,E)$ has arboricity $\alpha(G)$ if and only if $\alpha(G)>0$ is the smallest number of sets $E_1,\ldots,E_{\alpha(G)}$ that $E$ can be partitioned into, such that each subgraph $(V,E_{i})$ is a forest.
\end{theorem}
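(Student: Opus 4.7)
The plan is to prove each direction of the iff separately. The easy direction is to show that if $E$ partitions into $k$ forests $E_1,\ldots,E_k$, then $\alpha(G)\le k$. For any $U\subseteq V$ with $|U|\ge 2$, each $E_i\cap E(U)$ is a forest on a subset of $U$, hence has at most $|U|-1$ edges; summing gives $|E(U)|\le k(|U|-1)$, so $\lceil|E(U)|/(|U|-1)\rceil\le k$, and maximizing over $U$ yields $\alpha(G)\le k$.

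The hard direction is to exhibit a partition of $E$ into $\alpha(G)$ forests. I would invoke the Edmonds--Nash-Williams matroid union theorem applied to $k:=\alpha(G)$ copies of the graphic matroid $M$ of $G$, whose rank function is $r(S) = |V| - c(V,S)$, with $c(V,S)$ the number of connected components of $(V,S)$, and whose independent sets are exactly the forests. The $k$-fold union $M^k := M\vee \cdots \vee M$ has as its independent sets precisely the edge sets that admit a partition into $k$ edge-disjoint forests; hence it suffices to show that the whole ground set $E$ is independent in $M^k$.

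By the matroid union formula, the rank of $M^k$ equals $\min_{S\subseteq E}(|E\setminus S| + k\cdot r(S))$, so $E$ is independent in $M^k$ iff for every $S\subseteq E$ we have $k\cdot r(S) \ge |S|$. To verify this, given such an $S$ I would decompose the subgraph $(V(S),S)$ into its connected components $(U_1,S_1),\ldots,(U_c,S_c)$, so that $r(S) = \sum_i(|U_i|-1)$ and $|S|=\sum_i|S_i|$. Then $|S|/r(S)$ is a weighted average (with weights $|U_i|-1$) of the ratios $|S_i|/(|U_i|-1)$, hence bounded by the maximum component ratio. Since $(U_i,S_i)$ is connected, so is $G[U_i]$, and $|S_i|\le |E(U_i)|$, giving $|S_i|/(|U_i|-1)\le |E(U_i)|/(|U_i|-1)\le \alpha(G)$ by the definition of arboricity (up to the ceiling).

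The main obstacle, beyond invoking matroid union as a black box, is the careful reconciliation of the ceiling in the definition of $\alpha(G)$ with the strict ratio $|S|/r(S)$ appearing in the matroid rank formula: the minimum $k$ for which $E$ decomposes into $k$ forests is exactly $\lceil \max_S |S|/r(S)\rceil$, and the weighted-average reduction above together with restricting to $S=E(U)$ with $G[U]$ connected shows this equals $\max_U \lceil |E(U)|/(|U|-1)\rceil = \alpha(G)$. Combining with the easy direction then establishes that $\alpha(G)$ is precisely the minimum number of forests into which $E$ can be partitioned.
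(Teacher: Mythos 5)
The paper does not prove this theorem; it cites it as a classical result of Nash-Williams, so there is no in-paper proof to compare against. Your proof via the Edmonds--Nash-Williams matroid union theorem is a standard modern route and is correct: the easy direction is fine; for the hard direction, the reduction of $E$-independence in the $k$-fold union of the graphic matroid to the inequality $k\cdot r(S)\ge |S|$ for all $S\subseteq E$, the component-wise mediant/weighted-average bound $|S|/r(S)\le \max_i |S_i|/(|U_i|-1)$, the observation that each $(U_i,S_i)$ being connected lets you replace $|S_i|$ by $|E(U_i)|$, and the identity $\lceil \max_U |E(U)|/(|U|-1)\rceil = \max_U\lceil |E(U)|/(|U|-1)\rceil$ (needed to align the strict ratio from the rank formula with the ceiling in the paper's definition of arboricity) are all sound. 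Two small points you leave implicit but which hold: a union of $k$ forests can always be refined to a \emph{partition} into $k$ forests by greedily removing duplicates, and for any nonempty $S$ every component of $(V(S),S)$ has at least two vertices, so the ratios are well-defined and the empty case is trivial. This argument is genuinely different in flavor from Nash-Williams' original 1964 proof, which was a direct combinatorial exchange/augmenting argument; the matroid union route is shorter and conceptually cleaner but pushes the real work into the black-box union rank formula.
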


The Nash-Williams Theorem implies that one can orient the edges of an undirected graph $G=(V,E)$ with bounded arboricity $\alpha(G)$ such that the out-degree of each vertex is at most $\alpha(G)$. To see this, consider the partition $E_1,\ldots, E_{\alpha(G)}$ guaranteed by the Nash-Williams Theorem. For each forest $(V,E_i)$ and for each tree in that forest, designate one arbitrary vertex as the root of that tree, and orient all edges towards that root. In each oriented forest the out-degree of each vertex is at most 1, hence in the union of the oriented forests the out-degree of each vertex is at most $\alpha(G)$.
There exists
a polynomial-time algorithm for computing the exact arboricity $\alpha(G)$~\cite{GW92},
and a linear-time algorithm for computing a $(2\alpha(G) -1)$-orientation for a static graph $G$~\cite{AMZ97}.

For every graph $G$, the maximum out-degree (of its edge orientations)
is closely related to $\alpha(G)$:
There exists a \emph{static} orientation of maximum out-degree at most
$\alpha(G)$ (by the above argument using the Nash-Williams Theorem),
but the maximum out-degree is also easily seen to be at least $\alpha(G)-1$
(for every orientation).%
\footnote{To see this, let $U\subset V$ be such that
$\left\lceil \frac{|E(U)|}{|U|-1} \right\rceil = \alpha(G)$,
hence $\frac{|E(U)|}{|U|-1} > \alpha(G)-1$.
For every orientation, the maximum out-degree in $G$
is at least the average out-degree of vertices in $U$,
which in turn is at least
$\frac{\card{E(U)}}{\card{U}} > \frac{\card{U}-1}{\card{U}} (\alpha(G)-1)$.
The bound now follows from both $\alpha(G)$ and the maximum out-degree being integers.
}
In other words, the arboricity is a very natural candidate as a measure
of sparsity in the context of low out-degree orientations.

\subsection{Main Result}
\label{sec:results}

We obtain efficient algorithms for maintaining a low out-degree orientation of a fully dynamic graph $G$ with arboricity bounded by $\alpha$,
such that the out-degree of each vertex is small and the running time of all update operations is bounded in the worst-case.
Specifically, our algorithms maintain
\begin{itemize} \compactify
\item a maximum out-degree
$\Delta  \leq \inf_{\beta>1} \{\beta\cdot \alpha(G) + \lceil \log_\beta n \rceil\}$, and
\item insertion and deletion update times $O(\beta\cdot \alpha\cdot \log n)$
and $O(\Delta )$, respectively.
\end{itemize}
 Our algorithms have the following additional features:
(1) they are deterministic, (2) they are simple, and (3) each edge update changes the orientation of at most $\Delta +1$ edges.

Notice that for constant $\alpha$, we can take $\beta=2$ and all of our bounds translate to $O(\log n)$.
In other words, for any graph of constant arboricity, we can maintain an $O(\log n)$-orientation with $O(\log n)$ worst-case update time.
Previous work, which is discussed next, only obtained efficient \emph{amortized} update time bounds, in contrast to our bounds which are all in the worst-case. Our results resolve a fundamental open question raised by Brodal and Fagerberg~\cite{BF99} and restated by Erickson~\cite{Erickson06}, of whether such bounds are possible.

\subsection{Comparison with Previous Work}\label{subsec:related_work}
The dynamic setting in our context was pioneered by Brodal and Fagerberg~\cite{BF99} and extended by Kowalik~\cite{Kowalik07}.
Brodal and Fagerberg~\cite{BF99} showed that it is possible to maintain a $4\alpha$-orientation of a fully dynamic graph $G$ whose arboricity is always at most $\alpha$. They proved that their algorithm is $O(1)$-competitive against the number of re-orientations made by any algorithm, regardless of that algorithm's actual running time. They then provided a specific strategy for re-orienting edges which showed that the, for $\alpha=O(1)$, the insertion update time of their algorithm is {\em amortized} $O({1})$ while the cost of deletion is {\em amortized} $O(\log n)$ time. Kowalik~\cite{Kowalik07} provided another analysis of Brodal and Fagerberg's algorithm, showing that the insertion update time is {\em amortized} $O({\log n})$ while the cost of deletion is {\em worst-case} $O(1)$ time. Kowalik further showed that it is possible to support insertions in {\em amortized} $O(1)$ time and deletions in worst-case $O(1)$ time
by using an $O(\log n)$-orientation. These algorithms have been used as black-box components in many applications for dynamic graphs.

Algorithms with amortized runtime bounds may be insufficient for many real-time applications where infrequent costly operations might cause congestion in the system at critical times. Exploring the boundaries between amortized and worst-case bounds is also important from a theoretical point of view,
and has received a lot of research attention. The algorithms of Brodal and Fagerberg~\cite{BF99} and Kowalik~\cite{Kowalik07} both incur a \emph{linear worst-case update time}, on which we show an exponential improvement. As mentioned above, this answers an open question raised by Brodal and Fagerberg~\cite{BF99} (and restated by Erickson in~\cite{Erickson06})
of whether such bounds are obtainable.

\subsection{Our Techniques}\label{subsec:techniques}

The algorithm of Brodal and Fagerberg~\cite{BF99} is very elegant,
but it is not clear if it can be deamortized as it is inherently amortized.
The key technical idea we introduce is to maintain a combinatorial invariant,
which is very simple in its basic form:
for every vertex $u\in V$, at least (roughly) $\alpha$
outgoing edges are directed towards vertices
with almost as large out-degree, namely at least $\dg(u)-1$ (where $\dg(u)$ is the out-degree of $u$). Such edges are called \emph{valid} edges.
We prove in Section~\ref{sec:invariant} that this combinatorial invariant
immediately implies the claimed upper bound on $\Delta$.

An overview of the algorithms that we use for, say, insertion, is as follows.
When a new edge $(u,v)$ is added, we first orient it, say, from $u$ to $v$ guaranteeing that the edge is valid.
We now check if the invariant holds,
but the only culprit is $u$, whose out-degree has increased.
If we know which of the edges leaving $u$ are the ``special'' valid edges needed to maintain the invariant, we scan them to see if any of them are no longer valid (as a result of the insertion), and if there is such an edge we \emph{flip} its orientation,
and continue recursively with the other endpoint of the flipped edge.
This process indeed works, but it causes difficulty during an edge deletion
--- when one of the $\alpha$ special valid edges leaving $u$ is deleted,
a replacement may not even exist.

Here, our expedition splits into two different parts.
We first show an extremely simple (but less efficient) algorithm
that maintains a stronger invariant in which for every vertex $u\in V$,
\emph{all} of its out-going edges are valid.
This approach immediately gives the claimed upper bound on $\Delta$,
with update time roughly $O(\log ^2 n)$ for sparse graphs.

In the second part we refine the invariant using another idea of
\emph{spectrum-validity}, which roughly speaking uses the following invariant:
for every vertex $u\in V$ and for every $1\leq i \leq \frac{\deg(u)}{\alpha}$, at least $i\cdot \alpha$ of its outgoing edges are directed towards vertices with degree at least $\dg(u)-i$.
This invariant is stronger than the first invariant (which seemed algorithmically challenging) and weaker than the second invariant (whose bounds were less efficient than desired as it needed to guarantee validness for all edges).
Furthermore, maintaining this invariant is more involved algorithmically,
and one interesting aspect of our algorithm is that during an insertion process,
it does not scan the roughly $\alpha$ neighbors with degree at least $\dg(u)-1$, as one would expect,
but rather some other neighbors picked in a careful manner.
Ultimately, this methodology yields the improved time bounds claim in Section
\ref{sec:results}.

\subsection{Selected Applications}\label{subsec:applications}

We only mention two applications here by stating their theorems for graphs with arboricity bounded by a constant. We discuss these applications and some other ones with more detail in Appendix~\ref{app:applications}.

\begin{theorem}[Maximal matching in fully dynamic graphs]
Let $G=(V,E)$ be an undirected fully dynamic graph with arboricity bounded by a constant. Then one can deterministically maintain a maximal matching of $G$ such that the worst-case time per edge update
is $O(\log n)$.
\end{theorem}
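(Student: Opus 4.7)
The plan is to use the edge orientation algorithm of the main result as a black box, augmenting it with light matching bookkeeping in the style of Neiman and Solomon, so that the improved worst-case orientation bounds propagate to the matching. Since $\Delta=O(\log n)$ and each edge update triggers at most $O(\Delta)$ reorientations within worst-case time $O(\log n)$, we will have an $O(\log n)$ budget per update for all matching-related work.

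In addition to the $\Delta$-orientation, the algorithm maintains, for each vertex $v$, a pointer $M(v)$ to its current match (or \textsc{Null}) and a doubly linked list $I(v)$ holding all in-neighbors $w$ of $v$ (under the current orientation) with $M(w)=\textsc{Null}$. Each unmatched vertex keeps back-pointers to its occurrences in the $I(\cdot)$ lists of its (at most $\Delta$) out-neighbors, so that a matched-to-unmatched or unmatched-to-matched transition can be broadcast in $O(\Delta)$ time. The invariant is that $M$ is a maximal matching of the current graph and that $I(v)$ is exactly the set of unmatched in-neighbors of $v$.

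To handle an insertion of $(u,v)$, first invoke the orientation algorithm; intercept each of its $O(\Delta)$ reorientations (and the initial orientation of the new edge) as follows: when an edge flips from $x\to y$ to $y\to x$, remove $x$ from $I(y)$ if present and insert $y$ into $I(x)$ iff $M(y)=\textsc{Null}$, each in $O(1)$ using the back-pointers. Once the orientation has settled, if $M(u)=M(v)=\textsc{Null}$ add $(u,v)$ to $M$ and remove both $u$ and $v$ from the $I$-lists of their respective out-neighbors. A deletion is handled analogously: update $I$ during the reorientations exactly as above, and if $(u,v)\in M$, unmark both endpoints and broadcast the new \textsc{Null} status to their out-neighbors' $I$-lists; then, for each of $u$ and $v$, attempt to re-match it by taking any element of the corresponding $I$-list and, failing that, scanning its out-neighbors for an unmatched one, updating the $I$-lists again on success.

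Correctness of maximality reduces to the observation that only $u$ and $v$ can create a new unmatched--adjacent--to--unmatched pair after an update, and for each such vertex we examine both its out-neighbors and its list $I(\cdot)$, which together cover all of its neighbors in $G$. The running time per update is $O(\log n)$ for the orientation itself, $O(1)$ per reorientation for the $I$-list maintenance (totaling $O(\log n)$), and $O(\Delta)=O(\log n)$ for the rematching step. The main obstacle I expect is sequencing the bookkeeping so that the $I$-lists remain consistent at every moment when a matched-status change triggers a broadcast --- in particular, avoiding double-work when an insertion immediately yields a match (so that a just-created $I$-entry must be deleted right away), or when the deletion of a matched edge triggers reorientations that touch exactly the $I$-entries we are about to modify; this is pure bookkeeping, but must be organized so that no single event exceeds the $O(\Delta)$-per-status-change budget.
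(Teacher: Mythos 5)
Your proposal is correct and follows essentially the same route as the paper. The paper's own proof is a one-line black-box substitution: it invokes the Neiman--Solomon maximal-matching framework \cite{NS13}, which runs on top of an edge-orientation subroutine, and simply replaces the amortized Brodal--Fagerberg orientation with the new worst-case one; your proposal reconstructs (a clean variant of) the Neiman--Solomon bookkeeping from scratch --- the per-vertex list of unmatched in-neighbors with back-pointers, broadcasting match-status changes to out-neighbors, and intercepting each reorientation with $O(1)$ list maintenance --- and observes that the worst-case $O(\Delta)=O(\log n)$ bound on reorientations per update makes the whole accounting worst-case rather than amortized. The only point worth making explicit, which you flag but do not spell out, is the crucial structural fact that justifies the flat $O(\Delta)$ rematching budget: deleting a matched edge $(u,v)$ frees only $u$ and $v$, and rematching a freed vertex only \emph{matches} previously free vertices, never frees new ones, so no cascade of rematching can occur; this is why a single scan of $I(\cdot)$ plus at most $\Delta$ out-neighbors suffices for each of $u$ and $v$.
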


\begin{theorem}[Adjacency queries in fully dynamic graphs]
Let $G=(V,E)$ be an undirected fully dynamic graph with arboricity bounded by a constant. Then one can deterministically answer adjacency queries on $G$ in $O(\log \log \log n )$ worst-case time where the deterministic worst-case time per edge update
is $O(\log n \cdot \log \log \log n)$.
\end{theorem}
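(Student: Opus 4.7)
The plan is to layer a simple lookup structure on top of the main orientation result. By Section~\ref{sec:results} with $\alpha=O(1)$ and $\beta=2$, we maintain an $O(\log n)$-orientation of $G$ with worst-case update time $O(\log n)$, where each edge update triggers at most $\Delta+1=O(\log n)$ reorientations. The key observation for adjacency queries is that if $\{u,v\}\in E$, then the edge is oriented either from $u$ to $v$ or from $v$ to $u$, so $v$ appears in the out-neighbor list of $u$ or $u$ appears in the out-neighbor list of $v$. Each of these two lists has size at most $\Delta=O(\log n)$, so adjacency reduces to two membership tests in small dynamic sets.

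Concretely, for every vertex $u$ I would store the set of out-neighbors of $u$ in a deterministic dynamic dictionary $D_u$ that supports insertion, deletion, and membership in $O(\log\log\log n)$ worst-case time on sets of size $O(\log n)$ taken from the universe $[n]$. Such a dictionary is obtained by keeping a van Emde Boas tree over a \emph{local} universe $[\Theta(\log n)]$ (whose predecessor/membership cost is $O(\log\log\log n)$), together with an auxiliary recycling scheme that assigns each out-neighbor a local label and a secondary deterministic map from the global vertex identifier of each out-neighbor to its local label in $D_u$. The size of each $D_u$ is always bounded by $\Delta+1=O(\log n)$, so the local universe never overflows, and the auxiliary map can be realized within the same $O(\log\log\log n)$ worst-case budget using standard deterministic dictionaries on a polylogarithmic universe.

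The overall algorithm then runs as follows. An adjacency query for $\{u,v\}$ performs one membership query in $D_u$ and one in $D_v$, for a total worst-case time of $O(\log\log\log n)$. An edge update invokes the orientation algorithm of Section~\ref{sec:results} at cost $O(\log n)$ and produces at most $O(\log n)$ reorientations; for each reorientation $\orient{x}{y}\mapsto\orient{y}{x}$, we perform one deletion of $y$ from $D_x$ and one insertion of $x$ into $D_y$, each in $O(\log\log\log n)$ worst-case time. Summing, the update cost is $O(\log n)+O(\log n)\cdot O(\log\log\log n)=O(\log n\cdot\log\log\log n)$, matching the theorem.

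The main obstacle I expect is the data-structural bookkeeping rather than the reduction itself: making the global-to-local translation also worst-case deterministic in $O(\log\log\log n)$ time, and ensuring that the reuse of freed local labels keeps the local universe at $\Theta(\log n)$ under arbitrary update sequences without requiring global rebuilds. This can be handled by running any needed rebuild incrementally (spreading its work across the $\Omega(\log n)$ updates between rebuilds), so that each individual edge update still pays at most $O(\log n\cdot\log\log\log n)$ in the worst case.
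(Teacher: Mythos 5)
Your high-level reduction is the same as the paper's: maintain a $\Delta=O(\log n)$-orientation with the worst-case algorithm, store each vertex's out-neighbors in a small dynamic deterministic dictionary, and answer a query $\{u,v\}$ by two membership tests, paying a multiplicative dictionary overhead on the $O(\log n)$ reorientations during an update. The paper sketches exactly this (attributing the idea to Kowalik and delegating to an off-the-shelf deterministic dictionary with $O(\log\log\Delta)$ operations).

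However, your specific dictionary construction has a genuine gap, and it sits at the crux of the matter. You propose a van Emde Boas tree over a \emph{local} universe of size $\Theta(\log n)$, together with ``a secondary deterministic map from the global vertex identifier of each out-neighbor to its local label,'' and assert that this secondary map ``can be realized \ldots using standard deterministic dictionaries on a polylogarithmic universe.'' That assertion is false: the keys of the secondary map are \emph{global} vertex identifiers drawn from $[n]$, not from a polylogarithmic universe. So the secondary map is itself a deterministic worst-case dynamic dictionary for an $O(\log n)$-size set over universe $[n]$ --- i.e., precisely the problem you set out to solve. Your construction is circular, and the local vEB tree is doing no work: to even consult it you would first need $v$'s local label, which requires the secondary map to already answer the membership question. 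The paper sidesteps this by simply citing an existing ``dynamic deterministic dictionary'' result (the citation is admittedly left as a placeholder), whereas your write-up attempts to build one from scratch and the reduction does not close. To fix the argument you would need to either invoke a concrete known deterministic worst-case dictionary achieving $O(\log\log\Delta)$ per operation on sets of size $\Delta$ over $[n]$, or give a genuinely non-circular construction; the incremental-rebuild trick you mention at the end addresses label recycling but not the core lookup problem.
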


\subsection{Preliminaries}
\label{sec:prelims}

Let $G=(V,E)$ be an undirected graph, and denote $n=|V|$.
The \emph{arboricity} of the graph, denoted $\alpha(G)$,
is the smallest integer $\alpha\ge1$ such that
all nonempty $U \subseteq V$
satisfy $\card{E(U)}\leq \alpha(\card{U}-1)$, where $E(U)=\{(u,v)\in E:\ u,v\in U\}$.

An \emph{orientation} of the undirected edges of $G$ assigns a direction to every edge $e\in E$, thereby turning $G$ into a digraph. We will use the notation $\orient u v$ to indicate that the edge $e=(u,v)$ is oriented from $u$ to $v$.
Given such an orientation,
let $N^+(u)\eqdef \{v\in V:\ \orient u v\}$ denote the set of \emph{outgoing neighbors} of $u$, i.e., the vertices connected to $u$ via an edge leaving it,
and let $\dg(u) \eqdef |N^+(u)|$ denote the number of \emph{outgoing edges} of $u$ in this orientation, i.e., the \emph{out-degree} of $u$.
Similarly, let $N^-(u)\eqdef \{v\in V:\ \orient v u\}$ denote the set of \emph{incoming neighbors} of $u$, and let $\din(u) \eqdef |N^-(u)|$.
Finally, we denote by
$ \Delta \eqdef \max_{v\in V}\ \dg(v) $
the maximum out-degree of a vertex in the graph (under the given orientation).

Our algorithms will make use of the following heap-like data structure.

\begin {lemma} \label{lem:structure}
Let $X$ be a dynamic set,
where each element $x_i\in X$ has a key $k_i\in \N$ that may change with time,
and designate a fixed element $x_0\in X$ to be the \emph{center} of $X$
(although its key $k_0$ may change with time).
Then there is a data structure that maintains $X$ using $O(\card{X}+k_0)$ words of space,
and supports the following operations with $O(1)$ worst-case time bound
(unless specified otherwise):
\begin{itemize}[noitemsep,nolistsep]
\item {\sf ReportMax}$(X)$: return a pointer to an element from $X$ that has the maximum key.
\item {\sf Increment}$(X,x)$: given a pointer to an element $x\in X \setminus \{x_0\}$, increment the key of $x$.
\item {\sf Decrement}$(X,x)$: given a pointer to an element $x\in X  \setminus \{x_0\}$, decrement the key of $x$.
\item {\sf Insert}$(X,x_i,k_i)$: insert a new element $x_i$ with key $k_i\le k_{0} +1$ into $X$.
\item {\sf Delete}$(X,x)$: given a pointer to an element $x\in X  \setminus \{x_0\}$, remove $x$ from $X$.
\item {\sf IncrementCenter}$(X)$: increment $k_0$. This operation takes $O(k_{0})$ worst-case time.
\item {\sf DecrementCenter}$(X)$: decrement $k_{0}$ (unless $k_{0} = 1$). This operation takes $O(k_{0})$ worst-case time.
\end{itemize}
\end{lemma}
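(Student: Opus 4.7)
The plan is to implement $X$ by a bucketing scheme. For every key value $k\ge 1$, maintain a doubly-linked list $L_k$ containing exactly those elements of $X\setminus\{x_0\}$ whose current key equals $k$; each element also stores a direct pointer into its own node in $L_k$, so that unlinking from a bucket is $O(1)$. On top of the buckets keep a second doubly-linked list $\mathcal{B}$ whose nodes are the currently \emph{non-empty} buckets, chained in sorted order of key, and an array $A$ of size $\Theta(k_0)$ mapping every admissible key $k$ to a pointer to $L_k$ (or $\bot$ if $L_k$ is absent). The center $x_0$ and its current key are stored outside this machinery in a single record. Altogether the space is $O(|X|)$ for the element records and bucket nodes plus $\Theta(k_0)$ for $A$, matching the claimed bound.

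With this layout the constant-time operations are essentially direct. \textsf{ReportMax}$(X)$ returns any element of the tail bucket of $\mathcal{B}$, which by construction holds the largest key. \textsf{Increment}$(X,x)$ with $x$ of current key $k$ unlinks $x$ from $L_k$, deletes $L_k$ from $\mathcal{B}$ (and sets $A[k]=\bot$) if it just emptied, consults $A[k{+}1]$ to locate or freshly create $L_{k+1}$, splices $L_{k+1}$ into the correct slot of $\mathcal{B}$ if newly created, and finally inserts $x$ into $L_{k+1}$; \textsf{Decrement} is symmetric. \textsf{Insert}$(X,x_i,k_i)$ is legal because $k_i\le k_0{+}1$ lies in the index range of $A$, so the bucket $L_{k_i}$ is locatable in $O(1)$; we create it and link it into $\mathcal{B}$ if need be, then place $x_i$. \textsf{Delete} is simply the unlink step of \textsf{Increment} taken in isolation.

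For \textsf{IncrementCenter} and \textsf{DecrementCenter} we update $k_0$ by $\pm 1$ and reallocate $A$ to its new physical size, copying the $\Theta(k_0)$ surviving entries; this is the only step that costs $\Theta(k_0)$. Because $x_0$ lives outside the bucket structure, no bucket needs to be touched, and the $k_0=1$ corner case specified in the lemma is handled by a no-op on \textsf{DecrementCenter}.

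The one subtlety I expect to be the crux of the write-up is that the splice point in $\mathcal{B}$ required by \textsf{Increment}/\textsf{Decrement} must be identifiable in $O(1)$ even in the awkward case where the source bucket $L_k$ vanishes and the destination bucket $L_{k\pm1}$ must be inserted in the very same step. This is resolved by caching the two $\mathcal{B}$-neighbors of $L_k$ \emph{before} unlinking it: since $L_{k\pm1}$ is the key-adjacent bucket, its correct slot in $\mathcal{B}$ is either one of the cached neighbors (in which case no structural change to $\mathcal{B}$ is needed) or exactly the vacated gap between them (in which case a local splice finishes the job). Verifying that these pointer updates preserve all invariants of $\mathcal{B}$ and $A$ is routine but is where all the bookkeeping care must be spent.
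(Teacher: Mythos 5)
Your proposal differs from the paper's in one crucial structural choice: you thread \emph{all} non-empty buckets into a single sorted doubly-linked list $\mathcal{B}$ and additionally index the small-key buckets by an array $A$ of size $\Theta(k_0)$, whereas the paper keeps only the buckets with keys $\ge k_0+2$ in a sorted list $L$ (with head and tail pointers) and stores the buckets with keys $\le k_0+1$ \emph{exclusively} in the array $A$, with no sorted threading among them. This is not a cosmetic difference; it is exactly what breaks your {\sf Insert}. When {\sf Insert}$(X,x_i,k_i)$ must create a fresh bucket $L_{k_i}$, your scheme has to splice $L_{k_i}$ into the correct slot of $\mathcal{B}$; but unlike the {\sf Increment}/{\sf Decrement} case, which you address correctly via the cached-$\mathcal{B}$-neighbors argument, there is no adjacency anchor here: both $A[k_i-1]$ and $A[k_i+1]$ can be $\bot$, and the true $\mathcal{B}$-neighbors of the new bucket may be arbitrarily far away in key space, so locating the splice point is not an $O(1)$ step. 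The paper sidesteps this entirely: since {\sf Insert} is constrained to $k_i\le k_0+1$, a freshly created bucket always lands in the array-only region, where no sorted order is maintained and hence no splice point is ever needed; the only boundary crossings involving the sorted list occur in {\sf Increment}/{\sf Decrement} between $A[k_0+1]$ and the head of $L$, and the head of $L$ is a fixed, directly-addressable location.

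What your design was (implicitly) buying with this extra cost is a strictly stronger {\sf ReportMax}: the tail of your $\mathcal{B}$ would always be the true global maximum, whereas the paper's tail-of-$L$ is meaningful only when $L$ is non-empty, i.e.\ when some key is $\ge k_0+2$. The paper can afford this weaker guarantee because in its application {\sf ReportMax} on $H_u$ is used solely to test for an incoming neighbor of out-degree at least $\dg(u)+2$, i.e.\ key $\ge k_0+2$, and an empty $L$ simply reports ``none.'' To repair your argument you should either adopt the paper's two-regime split and accept this weaker but application-sufficient {\sf ReportMax}, or find another mechanism that keeps small-key buckets ordered without paying a non-constant splice cost in {\sf Insert}.
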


\begin{proof}

For each distinct key value $k$ such that there is some element $x\in X$ with that key $k$ we maintain a list $L_k$. The lists $L_k$ for $k\geq k_0+2$ are maintained in a sorted list $L$ in a natural way. We also maintain pointers to the head and tail of $L$. Notice that the tail gives direct access to an element with the largest key.
The lists $L_k$ for $k\leq k_{0} +1$ are not maintained in the sorted list. Instead, we maintain an array of pointers $A$ of size $k_{0}+1$ such that the pointer at index $i$ in $A$ points to $L_i$.
All the operations except for {\sf IncrementCenter}$(X)$ and {\sf DecrementCenter}$(X)$ are implemented by a constant number of straightforward operations on these lists.
The operations {\sf IncrementCenter}$(X)$ and {\sf DecrementCenter}$(X)$ are implemented naively by rebuilding the array $A$ from scratch. Notice that the lists that are indexed by $A$ are unchanged during these operations.
\end{proof}

For each vertex $w \in V$, consider the (dynamic) set $X_w$ that contains $w$ and all its incoming neighbors,
where the key of each element in $X$ is given by its out-degree. The center element of $X_w$ will be $w$ itself.
Each vertex $w$ will have its own data structure (using Lemma~\ref{lem:structure}) for maintaining $X_w$.
In what follows, we denote this data structure by $H_w$,
and use it to find an incoming neighbor of $w$ with out-degree
at least $\dg(w) + 2$ (if one exists) in $O(1)$ time.

\begin{lemma}
The total space used to store the data structures $H_w$ for all $w \in V$ is $O(n+m)$ words,
where $m$ stands for the number of edges in the (current) graph.
\end{lemma}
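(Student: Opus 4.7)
The plan is to invoke Lemma~\ref{lem:structure} on each $H_w$ individually and then sum the resulting bounds, exploiting the fact that in any orientation the total in-degree and total out-degree each equal $m$.

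First I would record the two ingredients that determine the space used by a single $H_w$. By construction, $X_w = \{w\} \cup N^-(w)$, so $\card{X_w} = 1 + \din(w)$. The designated center element of $X_w$ is $w$ itself, whose key, by definition, is $\dg(w)$; hence the center's key is $k_0 = \dg(w)$. Lemma~\ref{lem:structure} therefore guarantees that $H_w$ occupies $O(\card{X_w} + k_0) = O(1 + \din(w) + \dg(w))$ words of space.

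Next I would sum this bound over all $w \in V$:
\[
\sum_{w\in V} O\bigl(1 + \din(w) + \dg(w)\bigr)
= O(n) + O\!\left(\sum_{w\in V}\din(w)\right) + O\!\left(\sum_{w\in V}\dg(w)\right).
\]
Every edge of $G$ is oriented in exactly one direction, contributing exactly once to $\sum_w \din(w)$ and exactly once to $\sum_w \dg(w)$. Hence both sums equal $m$, and the total space is $O(n + m)$, as claimed.

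There is no real obstacle here; the only thing to be careful about is that the $O(k_0)$ term in Lemma~\ref{lem:structure} is charged per vertex (not per pair), so its sum telescopes to the sum of out-degrees rather than something larger. Once this observation is made, the bound follows by the handshake-style identity above.
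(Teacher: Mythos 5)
Your proof is correct and follows essentially the same approach as the paper: invoke Lemma~\ref{lem:structure} to bound each $H_w$ by $O(1+\din(w)+\dg(w))$ and then sum over all vertices, using the fact that $\sum_w \din(w) = \sum_w \dg(w) = m$. Your write-up merely spells out the intermediate identifications ($\card{X_w} = 1 + \din(w)$, $k_0 = \dg(w)$) that the paper leaves implicit.
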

\begin{proof}
By Lemma~\ref{lem:structure}, for each $w \in V$ the space usage is at most $O(1+\din(w) + \dg(w))$.
Summing over all vertices $w \in V$, the total space is $\sum_{w \in V} O(1+\din(w) + \dg(w)) = O(n+m)$.
\end{proof}

\section{Invariants for Bounding the Largest Out-degree}
\label{sec:invariant}

We assume throughout that the dynamic graph $G$ has, at all times,
arboricity $\alpha(G)$ bounded by some parameter $\alpha$, i.e., $\alpha(G)\leq \alpha$.
Let $\beta > 1$ be a parameter that may possibly depend on $n$ and $\alpha$
(it will be chosen later to optimize our bounds),
and define $\gamma \eqdef \beta \cdot \alpha$.

An edge $(u,v)\in E$ oriented such that $\orient u v $ is called \emph{valid} if $\dg(u) \leq \dg(v)+1$, and called \emph{violated} otherwise.
The following condition provides control (upper bound) on $\Delta$,
as proved in Theorem \ref{lem:invar}.
We refer to it as an \emph{invariant}, because we shall maintain
the orientation so that the condition is satisfied at all times.
\begin{inv} \label{weakest}
For each vertex $w$, at least $\min\{\dg(w),\gamma\}$ outgoing edges of $w$ are valid.
\end{inv}

\begin{theorem} \label{lem:invar}
If Invariant~\ref{weakest} holds,
then $\Delta \le \beta\cdot\alpha(G)  + \lceil \log_\beta n \rceil$.
\end{theorem}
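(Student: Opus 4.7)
My plan is to define a nested sequence of vertex sets based on proximity to the maximum out-degree, apply Invariant~\ref{weakest} to show that every vertex in one level contributes many valid outgoing edges into the next level, and then invoke the arboricity bound to derive a geometric growth rate that ultimately exceeds $n$.

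First, observe that if $\Delta \leq \beta\cdot\alpha(G)$ the claim is immediate, so we may assume $\Delta > \beta\cdot\alpha(G)$. I would then fix a vertex $v_0$ with $\dg(v_0)=\Delta$ and, for every integer $i \geq 0$, define the level set
\[
V_i \;:=\; \{v \in V : \dg(v) \geq \Delta - i\},
\]
so that $v_0 \in V_0$ and $V_0 \subseteq V_1 \subseteq V_2 \subseteq \cdots$. Since $\alpha(G) \leq \alpha$, the threshold in the invariant satisfies $\gamma = \beta\alpha \geq \beta\cdot\alpha(G)$; hence whenever $\dg(v) \geq \beta\cdot\alpha(G)$ the invariant guarantees at least $\min\{\dg(v),\gamma\} \geq \beta\cdot\alpha(G)$ valid outgoing edges at $v$. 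By the definition of validity, each such edge $\orient{v}{u}$ satisfies $\dg(u) \geq \dg(v) - 1$, so every valid outgoing edge from any $v \in V_i$ lands in $V_{i+1}$.

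The decisive step is a double counting: fix $i$ with $\Delta - i \geq \beta\cdot\alpha(G)$. Summing the at-least-$\beta\cdot\alpha(G)$ valid outgoing edges over all $v \in V_i$ yields at least $\beta\cdot\alpha(G) \cdot |V_i|$ distinct edges, all of which sit inside $G[V_{i+1}]$ because $V_i \subseteq V_{i+1}$. The induced subgraph $G[V_{i+1}]$ inherits arboricity at most $\alpha(G)$, so $|E(V_{i+1})| \leq \alpha(G)(|V_{i+1}| - 1)$, and combining these two inequalities gives $|V_{i+1}| > \beta \cdot |V_i|$.

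Starting from $|V_0| \geq 1$ and iterating this inequality as long as the condition $\Delta - i \geq \beta\cdot\alpha(G)$ remains in force produces $|V_i| > \beta^{i}$ for every such $i$. Choosing the largest admissible index and using $|V_i| \leq n$ forces $\beta^{i} < n$; taking logarithms and rearranging delivers the claimed $\Delta \leq \beta\cdot\alpha(G) + \lceil \log_\beta n \rceil$. The main obstacle I anticipate is purely bookkeeping: one must track carefully the transition index at which the invariant stops supplying $\beta\cdot\alpha(G)$ valid edges, and verify that the integrality of $\Delta$ together with the ceiling in the target bound absorbs the fractional slack coming from $\beta\cdot\alpha(G)$ and $\log_\beta n$ not being integers in general.
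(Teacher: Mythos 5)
Your proposal is correct and takes essentially the same approach as the paper: both build a nested sequence of level sets around a maximum-out-degree vertex, feed Invariant~\ref{weakest} into the arboricity bound to obtain the geometric growth $|V_{i+1}| > \beta\,|V_i|$, and conclude by comparing against $n$. The only cosmetic difference is that you define $V_i$ by the out-degree threshold $\dg(v) \ge \Delta - i$ while the paper defines it by reachability along valid directed paths of length at most $i$ from the source; the bookkeeping you flag resolves just as the paper handles it---prove $|V_1| > \beta$ as the base case (one application of the recurrence to $|V_0| \ge 1$, since $|V_1| \ge \beta|V_0| + 1 > \beta$) rather than $|V_0| > \beta^0$, after which setting $m = \lfloor \Delta - \beta\alpha(G) \rfloor$ gives $|V_{m+1}| > \beta^{m+1}$, hence $m + 1 < \log_\beta n$, hence $\Delta - \beta\alpha(G) < m+1 < \log_\beta n \le \lceil \log_\beta n \rceil$.
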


\noindent{\bf Remark:}
We can pick $\beta=\beta(\alpha(G))$ that minimizes the upper bound on $\Delta$.
In particular, for $\alpha (G)\geq \log n$,
setting $\beta = 2$ yields $\Delta \leq O(\alpha(G))$;
for small $\alpha(G)$, say all $\alpha(G)\leq \sqrt{\log n}$,
setting $\beta = \sqrt[4]{\log n}$
yields $\Delta \leq O(\frac{\log n}{\log \log n})$.

\begin{proof}
Assume Invariant~\ref{weakest} holds, and suppose for contradiction there
is a ``source'' vertex $\rt\in V$ satisfying $\dg(\rt) > \gamma + \lceil \log_\beta n \rceil$.
Now consider the set $V_i$ of vertices reachable from $\rt$ by directed paths of length at most $i$ that use only valid edges.
Observe that for every $1 \le i \le \lceil \log_\beta n \rceil$
and every vertex $w \in V_i$,
 $$\dg(w)
 \ge \dg(\rt) - i
 > \gamma + \lceil \log_\beta n \rceil  - i
 \ge \gamma,$$
implying that at least $\gamma$ outgoing edges of  $w$ are valid.

We next prove by induction on $i$ that $|V_i| > \beta^i$
for all $1 \le i \le \lceil \log_\beta n \rceil$. For the base case $i = 1$, notice that $\rt$ has at least $\gamma$ valid outgoing edges and all of the corresponding outgoing neighbors of $\rt$ belong to $V_1$. Furthermore, $\rt$ belongs to $V_1$ as well. Thus $|V_1| \ge \gamma + 1 > \gamma \ge \beta$.
For the inductive step, suppose $|V_{i-1}| > \beta^{i-1}$;
observe that the total number of valid outgoing edges from vertices in $V_{i-1}$ is at least $\gamma |V_{i-1}|$,
and furthermore all these edges are incident only to vertices in $V_{i}$.
Since the graph's arboricity is $\alpha(G) \leq \alpha$,
we can bound $|V_{i}|-1 \geq \gamma |V_{i-1}| / \alpha(G) \geq \beta |V_{i-1}| > \beta^i$,
as claimed.

We conclude that $|V_{\lceil \log_\beta n \rceil}| > \beta^{\lceil \log_\beta n \rceil} \ge n$,
yielding a contradiction.
\end{proof}

Invariant~\ref{weakest} provides a relatively weak guarantee as
if $\dg(w) > \gamma$,
then we know only that $\gamma$ outgoing edges of $w$ are valid,
and have no guarantee on the out-degree of the other $\dg(w) - \gamma$
outgoing neighbors of $w$.
Consequently, it is nontrivial to maintain Invariant~\ref{weakest} efficiently,
and in particular, if one of the $\gamma$ valid edges (outgoing from $w$) is deleted,
the invariant might become violated, and it is unclear how to restore it efficiently.
We thus need another invariant,
namely, a stronger condition (so that Theorem~\ref{lem:invar} still applies)
that is also easy to maintain.
The next invariant is a natural candidate,
as it is simple to maintain (with reasonable efficiency).
\begin{inv}\label{inv:degree_inequality}
All edges in $G$ are valid.
\end{inv}

We first present in Section~\ref{sec:algo} a very simple algorithm
that maintains Invariant~\ref{inv:degree_inequality}
with update times $O(\Delta^2)$ and $O(\Delta)$ for insertion and deletion (of an edge), respectively.
This algorithm provides a strong basis for a more sophisticated algorithm,
developed in Section~\ref{sec:efficient_algo},
which maintains an intermediate invariant (stronger
than Invariant~\ref{weakest} but weaker than Invariant~\ref{inv:degree_inequality})
with update times $O(\gamma \cdot \Delta)$ and $O(\Delta)$ for
insertion and deletion, respectively.

\section{Worst-case Algorithm}\label{sec:algo}
We consider an infinite sequence of graphs $G_0,G_1,\ldots$ on a fixed vertex set $V$, where each graph $G_i=(V,E_i)$ is obtained from the previous graph $G_{i-1}$ by either adding or deleting a single edge. For simplicity, we assume that $G_0$ has no edges. Denote by $\alpha_i=\alpha(G_i)$ the arboricity of
$G_i$. We will maintain Invariant~\ref{inv:degree_inequality} while edges are inserted and deleted into and from the graph, which by Theorem~\ref{lem:invar} implies that the maximum out-degree $\Delta_i$ in the orientation of $G_i$ is bounded by $O(\inf_{\beta>1} \{\beta\cdot \alpha_i + \log_\beta n\})$.

For the rest of this section we fix $i$ and consider a graph $G_i$
that is obtained from a graph $G_{i-1}$ satisfying Invariant~\ref{inv:degree_inequality}
by either adding or deleting edge $e=(u,v)$.

\subsection{Insertions}

Suppose that edge $(u,v)$ is added to $G_{i-1}$ thereby obtaining $G_i$. We begin by orienting the edge from the endpoint with lower out-degree to the endpoint with larger out-degree (breaking a tie in an arbitrary manner).
So without loss of generality we now have $\orient u v$. Notice that the only edges that may be violated now are edges outgoing from $u$, as $\dg(u)$ is the only out-degree that has been incremented.
Furthermore, if some edge $\orient u v'$ is violated now,
then removing this edge will guarantee that there are no violated edges.
However, the resulting graph would be missing the edge $(u,v')$ just removed.
So we recursively insert the edge $(u,v')$, but orient it in the opposite direction (i.e., $\orient{v'} u$).
This means that we have actually \emph{flipped} the orientation of $(u,v')$, reverting $\dg(u)$ to its value before the entire insertion process took place.
This recursive process continues until all edges of the graph are valid.
Moreover, at any given time there is at most one ``missing'' edge,
and the graph obtained at the end of the process has no missing edges.
Our choice to remove a violated edge outgoing from $u$ (if such an edge exists)
guarantees that the number of recursive steps is at most $\Delta$,
as we will show later.
This insertion process is described in Algorithm~\ref{alg1}.

\begin{algorithm}
\caption{{\sf \ri$(G,(u,v))$}}
/* Assume without loss of generality that $\dg(u) \leq \dg(v) $ */
\begin{algorithmic}[1]
\label{alg1}
\STATE add $(u,v)$ to $G$ with orientation $\orient u v$
\STATE {\sf Insert}$(H_{v},u,\dg(u)-1)$ /* this key will be incremented in line 10 if needed */
\FOR {$v'\in N^+(u)$}
\IF {$\dg(u) > \dg(v')+1$ }
\STATE remove $(u,v')$ from $G$ /* now edge $(u,v')$ is missing */
\STATE {\sf Delete}$(H_{v'}, u)$
\STATE \ri$(G,(v',u))$ /* recursively insert $(u,v')$, but oriented $\orient {v'} u$ */
\STATE return
\ENDIF
\ENDFOR
\FOR {$v'\in N^+(u)$}
\STATE {\sf Increment}$(H_{v'},u)$ \ENDFOR
\STATE {\sf IncrementCenter}$(H_{u})$

\end{algorithmic}
\end{algorithm}

We remark that although in line 1 the out-degree of $u$ is incremented by 1, we do not update the new key of $u$
in the appropriate structures (i.e.,  $H_u$ and $H_{v'}$ for all $v' \in N^+(u)$),
because if the condition in line 4 succeeds
for some $v' \in N^+(u)$, the out-degree of $u$ will return to its original value, and we want to save the cost of incrementing and then decrementing the key for $u$ in all structures. However, if that condition fails for all $v'$,
we will perform the update in lines 9--11.

\paragraph{Correctness and Runtime Analysis}

For the following, assume that after the insertion of $(u,v)$ there is a violated edge $(u,v')$ (choosing one arbitrarily if it is not unique).

\begin{obs}\label{obs:violated_edge_insertion_bound}
Consider a call to \ri$(G,(u,v))$, with $G$'s orientation
satisfying Invariant~\ref{inv:degree_inequality},
and suppose that the condition in line 4 succeeds on $(u,v')$.
Then it must be that at the time the condition is tested, $\dg(u)=\dg(v')+2$.
\end{obs}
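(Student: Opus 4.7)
The plan is to view the observation as a short two-line consequence of the pre-call invariant combined with the specific effect of line~1 on out-degrees. First I will fix notation: let $\dg^{\text{old}}(u)$ denote the out-degree of $u$ at the moment the call \ri$(G,(u,v))$ is made (before line~1 executes), and let $\dg(u)$ denote its value at the time the condition in line~4 is tested. Since line~1 adds exactly one outgoing edge $\orient{u}{v}$ and no other step in between changes any out-degree, I get $\dg(u) = \dg^{\text{old}}(u) + 1$, while every $\dg(v')$ for $v' \in N^+(u)$ retains its pre-call value.

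My next step is to dispose of the subcase $v' = v$ corresponding to the just-inserted edge. By the WLOG orientation assumption stated in the preamble, $\dg^{\text{old}}(u) \le \dg(v)$, so $\dg(u) \le \dg(v)+1$, and therefore the condition in line~4 cannot succeed on $v' = v$. Hence the successful $v'$ must correspond to a \emph{pre-existing} outgoing edge $\orient{u}{v'}$ of $G$, to which I may legitimately apply Invariant~\ref{inv:degree_inequality}: validity of that edge gives $\dg^{\text{old}}(u) \le \dg(v')+1$, and adding $1$ yields the upper bound $\dg(u) \le \dg(v')+2$. Combined with the hypothesis that the condition in line~4 succeeds on $v'$, which is exactly $\dg(u) \ge \dg(v')+2$, this pins down $\dg(u) = \dg(v')+2$.

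I do not anticipate any genuine obstacle in this argument. The only subtlety worth flagging is the need to separate the freshly inserted edge $\orient{u}{v}$ from the other edges outgoing from $u$: the invariant is assumed to hold for $G$ at the moment of the call, so it provides no a~priori bound on the newly added edge, and using the WLOG orientation choice to exclude $v' = v$ is precisely what allows the invariant to be invoked on the relevant edge $\orient{u}{v'}$.
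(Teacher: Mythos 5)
Your proof is correct and matches the approach the paper uses for the analogous deletion statement (Lemma~\ref{lem:violated_edge_deletion_bound}); the insertion observation itself is stated without proof in the paper. Your explicit handling of the subcase $v'=v$ (to justify invoking Invariant~\ref{inv:degree_inequality} on a pre-existing edge) is a small but genuine detail the paper's implicit argument glosses over.
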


\begin{lemma}\label{lem:valid_graph_insertion_recursive_call}
During the execution of \ri$(G,(u,v))$ on a graph $G$
whose orientation satisfies Invariant~\ref{inv:degree_inequality},
every recursive call made to \ri \
has an input graph with an orientation that satisfies Invariant~\ref{inv:degree_inequality}.
\end{lemma}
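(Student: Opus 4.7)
The plan is to argue by induction on the recursion depth. The base case is vacuous: if \ri$(G,(u,v))$ makes no recursive call then there is nothing to prove. For the inductive step, I will focus on the first recursive call made from a top-level invocation \ri$(G,(u,v))$ with $G$ satisfying Invariant~\ref{inv:degree_inequality}, and show that the graph passed into that recursive call also satisfies the invariant; the lemma then follows from the inductive hypothesis applied to the recursive call itself.

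So suppose that the condition in line~4 succeeds on some outgoing neighbor $v' \in N^+(u)$, triggering the recursive call in line~7 on edge $(v',u)$. Let $d_u$ denote the out-degree of $u$ just before line~1, and note that after line~1 the out-degree of $u$ is $d_u+1$ while every other vertex has the same out-degree as in the input graph. By Observation~\ref{obs:violated_edge_insertion_bound}, at the moment the condition succeeds, $\dg(u) = \dg(v')+2$, so $d_u + 1 = \dg(v')+2$, i.e., $d_u = \dg(v')+1$. After line~5 removes $(u,v')$, the out-degree of $u$ reverts to $d_u = \dg(v')+1$, and all other out-degrees in $G$ coincide with those in the input graph.

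I then verify that every edge of the graph $G$ passed into the recursive call is valid. There are three cases: (i) the newly inserted edge $(u,v)$ is valid because $\dg(u) = d_u \leq \dg(v)$ by the choice of orientation in line~1; (ii) any other outgoing edge $\orient{u}{v''}$ of $u$ was valid in the original input graph when $\dg(u)=d_u$, and it is still valid now since $\dg(u)$ has returned to $d_u$ and $\dg(v'')$ has not changed; (iii) any incoming edge $\orient{w}{u}$ or edge with both endpoints distinct from $u$ and $v'$ has the same pair of out-degrees as in the original graph, hence remains valid. Finally, for the precondition of the recursive call (the comment above line~1 requires orienting from the lower out-degree endpoint), we check $\dg(v') = d_u - 1 < d_u = \dg(u)$.

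The only subtlety, and the main obstacle, is bookkeeping: the out-degree of $u$ fluctuates during a single invocation (increased by line~1, decreased again by line~5), so to conclude that the graph fed into the recursive call is valid one must identify the exact value of $\dg(u)$ at the moment of the recursive call and compare it to the validity conditions that held in the original input graph. Observation~\ref{obs:violated_edge_insertion_bound} is what makes this comparison tight by pinning down $d_u=\dg(v')+1$ exactly, rather than just giving the inequality coming from the pre-insertion invariant.
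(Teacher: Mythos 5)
Your proof is correct and follows essentially the same route as the paper's: induction on recursion depth, with the key point (via Observation~\ref{obs:violated_edge_insertion_bound}) that removing the violated edge $(u,v')$ reverts $\dg(u)$ to its pre-insertion value, so every edge other than the new $(u,v)$ keeps the same endpoint out-degrees, and $(u,v)$ itself is valid because $\dg(u)\le\dg(v)$ initially. You are somewhat more explicit than the paper — in particular you also verify that $\dg(v')<\dg(u)$ so the recursive call's orientation precondition actually holds, a detail the paper's proof leaves implicit — but this is a refinement, not a different argument.
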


\begin{proof}
The proof is by induction on the number of recursive steps of the algorithm. When edge $(u,v)$ is inserted and oriented, any violated edge must be outgoing from $u$. The algorithm checks (by brute-force) all such edges, and once it finds such an edge, that edge is removed thereby reverting $\dg(u)$ to its original value prior to the insertion of $(u,v)$. All edges in the remaining graph other than $(u,v)$ are valid by induction, and edge $(u,v)$ is valid as prior to the recursive call we had $\dg(u) \leq \dg(v)$.
\end{proof}

As a direct consequence from Lemma~\ref{lem:valid_graph_insertion_recursive_call} we have the following.

\begin{lemma}\label{cor:insertion_correct}
At the end of the execution of \ri \  on an input graph which has an orientation satisfying Invariant~\ref{inv:degree_inequality}, Invariant~\ref{inv:degree_inequality} holds for the resulting graph and orientation.
\end{lemma}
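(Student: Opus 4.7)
The plan is to argue by induction on the depth of the recursion in \ri. The base case occurs when the loop in lines 3--10 of \ri$(G,(u,v))$ completes without the condition in line 4 ever succeeding; in this case the call returns in line 14 after merely updating the bookkeeping structures. I would observe that since the condition failed for every $v' \in N^+(u)$, we have $\dg(u) \leq \dg(v')+1$ for all such $v'$, so every outgoing edge of $u$ is valid in the resulting orientation. The only change between the input graph and the resulting graph is the addition of $\orient u v$ together with the increment of $\dg(u)$; since no other out-degree changed, every edge not outgoing from $u$ retains the validity status it had in the input (which was valid by assumption). Hence Invariant~\ref{inv:degree_inequality} holds at termination.

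For the inductive step, I would consider the case in which the condition in line 4 succeeds on some $v' \in N^+(u)$ and the algorithm removes $\orient u v'$, calls \ri$(G,(v',u))$ in line 7, and returns. By Lemma~\ref{lem:valid_graph_insertion_recursive_call}, the graph handed to this recursive call already satisfies Invariant~\ref{inv:degree_inequality}. By the inductive hypothesis (applied to the strictly shallower recursion tree rooted at that call), the recursive execution terminates with an orientation satisfying Invariant~\ref{inv:degree_inequality}. Since the outer call returns immediately (line 8), its output orientation is identical to that produced by the recursive call, so the invariant holds for it as well.

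The only subtle point—which I would not consider a real obstacle since it has essentially been handled in the proof of Lemma~\ref{lem:valid_graph_insertion_recursive_call}—is verifying that the brute-force scan in lines 3--10 is sufficient to detect every potentially violated edge. This is immediate because the orientation of the input to \ri$(G,(u,v))$ satisfied Invariant~\ref{inv:degree_inequality} and the only out-degree affected by line 1 is $\dg(u)$; therefore the only edges whose validity could possibly be compromised are the outgoing edges of $u$, all of which are explicitly examined. The induction is well-founded because Lemma~\ref{lem:valid_graph_insertion_recursive_call} (together with the fact that $\Delta$ is finite) ensures the recursion terminates, which completes the proof.
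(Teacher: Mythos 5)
Your proof is correct and is essentially the explicit unpacking of what the paper treats as an immediate corollary of Lemma~\ref{lem:valid_graph_insertion_recursive_call}: an induction on the recursion depth where the base case is the call in which the condition at line~4 never fires, and the inductive step passes through Lemma~\ref{lem:valid_graph_insertion_recursive_call}. One small inaccuracy: Lemma~\ref{lem:valid_graph_insertion_recursive_call} by itself (even together with the finiteness of $\Delta$) does not guarantee termination; what does is the fact that the out-degree of the ``active'' vertex strictly decreases along the recursion, which is Observation~\ref{obs:violated_edge_insertion_bound} (and is made explicit in Lemma~\ref{lem:insertion_main}). This does not damage the argument, since termination is established independently, but the attribution should be to that observation rather than to Lemma~\ref{lem:valid_graph_insertion_recursive_call}.
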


Next, we bound the total number of edge re-orientations performed and the time spent during the insertion process.

\begin{lemma}\label{lem:insertion_main}
The total number of recursive calls (and hence re-orientations) of \ri \ due to an insertion into $G$ is at most $\Delta +1$, and the total runtime is bounded by $O({\Delta }^2)$.
\end{lemma}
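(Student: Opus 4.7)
My plan is to exploit the recursive structure by tracking how the out-degree of the ``source'' vertex (the first argument passed to \ri) evolves across the chain of recursive invocations triggered by a single insertion. The key claim is that this out-degree drops by exactly one at every recursive step, which will immediately cap both the recursion depth and the work performed at each level.

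More concretely, I will set $u_0 := u$, let $d_0$ denote its out-degree at the moment \ri\ is first invoked (just before line 1), and prove by induction on $k$ that each entry into the $k$-th nested call has a source $u_k$ satisfying $\dg(u_k) = d_0 - k$. The base case is by definition. For the inductive step, line 1 of the current call temporarily raises the source's out-degree to $d_0 - k + 1$. If the test on line 4 succeeds for some $v' \in N^+(u_k)$, then by Lemma~\ref{lem:valid_graph_insertion_recursive_call} the graph at the start of this call satisfied Invariant~\ref{inv:degree_inequality}, so Observation~\ref{obs:violated_edge_insertion_bound} pins down $\dg(v') = d_0 - k - 1$ exactly. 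Line 5 then removes $(u_k, v')$ (restoring $\dg(u_k) = d_0 - k$) and invokes \ri$(G,(v', u_k))$; since now $\dg(v') < \dg(u_k)$, the ``WLOG'' convention at the top of the algorithm makes $v'$ the new source $u_{k+1}$ with out-degree $d_0 - k - 1$, completing the induction.

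Both conclusions of the lemma follow quickly from this chain. Out-degrees are nonnegative integers, so the recursion must terminate by level $k = d_0$, yielding at most $d_0 + 1 \leq \Delta + 1$ total invocations; each nested call also flips exactly one edge (the one removed in line 5 and re-added in the opposite direction by the ensuing recursive call), so the same bound applies to the number of re-orientations. For the runtime, I will argue that a single invocation costs $O(\Delta)$ work: lines 3--10 scan $N^+(u_k)$, which has size at most $\Delta$; lines 11--13 perform at most $\Delta$ constant-time updates via Lemma~\ref{lem:structure}; and line 14's IncrementCenter costs $O(\dg(u_k)) = O(\Delta)$. Multiplying by the recursion depth then yields the claimed $O(\Delta^2)$ total runtime.

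The main technical point is pinning down the \emph{exact} identity $\dg(v') = d_0 - k - 1$ at each level---not merely bounding it from one side---because this is precisely what forces $v'$ (and not $u_k$) to become the new source in the next call, driving the monotone decrease. Observation~\ref{obs:violated_edge_insertion_bound} supplies this exact value, and coupling it with the ``orient from lower out-degree'' rule of line 1 is the crux of the argument; the remainder of the proof is routine bookkeeping.
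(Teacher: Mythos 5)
Your proof is correct and uses essentially the same approach as the paper: track the out-degree of the source vertex across the chain of recursive calls, use Observation~\ref{obs:violated_edge_insertion_bound} to show it decreases by exactly one at each step (which also forces $v'$ to be the source of the next call under the WLOG reordering), and bound the per-call work by $O(\Delta)$. Your explicit induction merely makes the paper's telescoping argument more formal, and your per-call cost accounting is slightly coarser (the paper notes that lines 9--11 run only once) but yields the same $O(\Delta^2)$ bound.
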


\begin{proof}
Consider an execution of \ri$(G,(u,v))$ and let $x$ be the out-degree of $u$ at the beginning of this execution. Furthermore, assume that during this execution we reach line 7 to call \ri$(G',(v',u))$ (notice that the difference between $G$ and $G'$ is two edges), and let $y$ be the out-degree of $v'$ at the start of this recursive call. Then due to Observation~\ref{obs:violated_edge_insertion_bound} it must be that $y=x-1$. In other words, in each consecutive recursive call the out-degree of the first vertex variable is decremented, and so the maximum number of recursive steps is bounded by $\Delta +1$.

The running time of each recursive step is dominated by a scan of the outgoing edges of some vertex which takes at most
$O(\Delta)$ time, yielding a total of $O(\Delta^2)$ time for the entire process. Finally, lines 9-11 are executed only once and take a total of $O(\Delta)$ time.
\end{proof}

\subsection{Deletions}\label{subsec:deletions}

Suppose that edge $(u,v)$ is deleted from $G_{i-1}$ thereby obtaining $G_i$.
Assume without loss of generality that in the orientation of $G_{i-1}$ we had $\orient u v$.
We begin by removing $(u,v)$ from our data structure.
Notice that the only edges that may be violated now are edges incoming into $u$.
Furthermore, if there is an edge $\orient{v'} u$ that is violated now,
then adding to the graph another copy of $(u,v')$ (producing a multi-graph)
that is oriented in the opposite direction (i.e., $\orient u v'$)
will guarantee that there are no violated edges.
However, the resulting multi-graph has an extra edge that should be deleted.
So we now recursively delete the original copy of edge $(u,v')$
(not the copy that was just added, oriented $\orient u v'$, which we keep).
This means that we have actually \emph{flipped} the orientation of $(u,v')$, reverting $\dg(u)$ to its value before the entire deletion process took place.
This recursive process will continue until all edges of the graph are valid.
Moreover, there is at most one duplicated edge at any given time, and the graph obtained at the end of the process has no duplicated edges.
Our choice to add a copy of a violated edge incoming to $u$ (if such an edge exists)
guarantees that the number of recursive steps is at most $\Delta$,
as we will show later.
This deletion process is described in Algorithm~\ref{alg2}.

\begin{algorithm}
\caption{\rd$(G,(u,v))$}
/* Assume without loss of generality that $(u,v)$ is oriented as $\orient u v$.
If another copy of $(u,v)$ is oriented $\orient v u$, assume without loss of generality that $\dg(u) \geq \dg(v) $ */
\begin{algorithmic}[1]
\label{alg2}
\STATE remove $(u,v)$ from $G$ /* if there are two copies of $(u,v)$, delete the one oriented as $\orient u v$ */
\STATE {\sf Delete}$(H_{v},u)$.
\STATE $v' \leftarrow$ ReportMax$(H_{u})$
\IF {$\dg (v') > \dg(u)+1$}
\STATE add $(u,v')$ to $G$ with orientation $\orient u {v'}$ /* now there are two copies of $(u,v')$ */
\STATE {\sf Insert}$(H_{v'}, u, \dg(u))$
\STATE \rd$(G,(v',u))$ /* recursively delete the copy of $(v',u)$ oriented as $\orient {v'} u$ */
\STATE return
\ENDIF
\FOR {$v'\in N^+(u)$}
\STATE {\sf Decrement}$(H_{v'},u)$
\ENDFOR
\STATE {\sf DecrementCenter}$(H_{u})$

\end{algorithmic}
\end{algorithm}

We briefly mention that although in line 1 we decrement the out-degree of $u$, we do not decrement the key of $u$ in the appropriate structures of Lemma~\ref{lem:structure}. This is because if we ever pass the condition in line 4, the out-degree of $u$ will return to its original value, and we want to save the cost of decrementing and then incrementing the key for $u$ in all structures. However, if the condition does not pass, then we will perform the update in lines 9--11.

\paragraph{Correctness and Runtime Analysis}
For the following, assume that after the deletion of $(u,v)$ there is a violated edge $(u,v')$ incoming into u (choosing one arbitrarily if it is not unique).

\begin{lemma}\label{lem:violated_edge_deletion_bound}
Consider a call to \ri$(G,(u,v))$, with $G$'s orientation
satisfying Invariant~\ref{inv:degree_inequality},
and suppose that the condition in line 4 succeeds on $(u,v')$.
Then it must be that at the time the condition is tested, $\dg(v')=\dg(u)+2$.
\end{lemma}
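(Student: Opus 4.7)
The plan is to bound $\dg(v')$ both from above, using Invariant~\ref{inv:degree_inequality} applied to the graph just before the call, and from below, using the test in line~4, and then observe that the two bounds pin $\dg(v')$ down exactly. Note the statement is naturally read as concerning \rd, since line~4 refers to the deletion algorithm (Algorithm~\ref{alg2}); I proceed under that reading.

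First I would confirm that $v'$ is a genuine incoming neighbor of $u$, not the center $u$ itself. The data structure $H_u$ contains $u$ together with all its incoming neighbors, so a priori ReportMax could return $u$; but in that case the test $\dg(v')>\dg(u)+1$ would read $\dg(u)>\dg(u)+1$, which fails. Since we are told the condition succeeds, we must have $v'\neq u$, so the edge $\orient{v'}{u}$ is present in $G$ at the moment of the call, and by Invariant~\ref{inv:degree_inequality} it is valid: $\dg(v')\le \dg(u)+1$, where both sides refer to the pre-call values.

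Next I would track how the two relevant out-degrees evolve between the start of the call and the evaluation of line~4. Line~1 removes the copy of $(u,v)$ oriented as $\orient{u}{v}$, so it decrements $\dg(u)$ by exactly one while leaving $\dg(v')$ untouched (the removed edge is not outgoing from $v'$). Consequently, at the moment line~4 is tested, $\dg(v')$ equals its pre-call value, but $\dg(u)$ has dropped by one; the pre-call inequality therefore becomes $\dg(v')\le \dg(u)+2$ in terms of the current out-degrees.

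Finally, I would combine this upper bound with the test that triggered line~4: $\dg(v')>\dg(u)+1$ gives $\dg(v')\ge \dg(u)+2$, and together with $\dg(v')\le \dg(u)+2$ this forces equality, as claimed. I do not expect any real obstacle; the only point that needs care is the off-by-one bookkeeping around line~1, and in particular the fact that the key of $u$ inside $H_u$ is not refreshed until line~12. This distinction is harmless because $\dg(\cdot)$ in the lemma denotes the combinatorial out-degree in the current graph, not the stored key, so the delayed update in the data structure does not affect the argument.
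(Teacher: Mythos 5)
Your proof is correct and takes essentially the same approach as the paper's: the upper bound $\dg(v')\le\dg(u)+2$ comes from Invariant~\ref{inv:degree_inequality} on the pre-call graph together with the fact that line~1 decrements $\dg(u)$ by one, and the lower bound $\dg(v')\ge\dg(u)+2$ comes from the test in line~4. You correctly read the lemma's \ri\ as a typo for \rd, and your additional remarks (ruling out $v'=u$ and noting that the stale key of $u$ in $H_u$ is irrelevant since the test uses actual out-degrees) are sound, just made implicit in the paper.
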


\begin{proof}
If $\dg(v')\leq \dg(u)+1$ then $(u,v')$ is not violated and the condition would be false. If $\dg(v')\geq \dg(u)+3$ then it must have been that $\dg(v')\geq \dg(u)+2$ before the deletion of edge $(u,v)$, contradicting the assumption that the input graph $G$ has an orientation for which Invariant~\ref{inv:degree_inequality} holds.
\end{proof}

\begin{lemma}\label{lem:valid_graph_deletion_recursive_call}
 During the execution of \rd$(G,(u,v))$ on a graph $G$ whose orientation satisfies Invariant~\ref{inv:degree_inequality}, each recursive call made to \rd \ has an input graph with an orientation that satisfies Invariant~\ref{inv:degree_inequality}.
\end{lemma}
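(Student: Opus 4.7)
The plan is to argue by induction on the depth of recursion, exactly mirroring the structure of Lemma~\ref{lem:valid_graph_insertion_recursive_call}. The base case is trivial: the outermost call is given a graph $G$ whose orientation satisfies Invariant~\ref{inv:degree_inequality} by hypothesis. For the inductive step, I will assume that the current invocation \rd$(G,(u,v))$ receives an input whose orientation satisfies Invariant~\ref{inv:degree_inequality}, and show that the graph $G'$ passed to the recursive call in line~7 (if one is made) also satisfies it.

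First I will describe precisely how $G'$ differs from $G$. On the path from line~1 to line~7 the algorithm (i) deletes the edge $(u,v)$ oriented $\orient{u}{v}$, which decrements $\dg(u)$ by one and leaves all other out-degrees untouched, and (ii) adds a new copy of $(u,v')$ oriented $\orient{u}{v'}$, which increments $\dg(u)$ back up by one and still leaves all other out-degrees untouched. Consequently, in $G'$ every vertex has exactly the same out-degree as in $G$, the old edge $(u,v)$ is gone, and there are now two parallel copies of $(u,v')$, one oriented $\orient{v'}{u}$ (the original) and one oriented $\orient{u}{v'}$ (the freshly added one).

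With this in hand the verification of Invariant~\ref{inv:degree_inequality} on $G'$ reduces to a short case analysis. Any edge that is neither of the two copies of $(u,v')$ involves endpoints whose out-degrees are identical in $G$ and $G'$, so validity is inherited directly from $G$. It therefore remains to check the two parallel copies, and this is where Lemma~\ref{lem:violated_edge_deletion_bound} enters: at the moment the condition in line~4 succeeds we have $\dg(v') = \dg(u)+2$ as measured \emph{after} the line~1 deletion; restoring $\dg(u)$ by adding the $\orient{u}{v'}$ copy therefore leaves $\dg(v') = \dg(u)+1$ in $G'$. This immediately certifies the copy $\orient{v'}{u}$ as valid, and for the copy $\orient{u}{v'}$ we get $\dg(u) = \dg(v')-1 \le \dg(v')+1$, which is valid as well.

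The main subtlety, and therefore the step I will write out most carefully, is the bookkeeping of when degrees take which value: the algorithm never rewrites $\dg(u)$'s key between lines~1 and~9 precisely so that at the moment Lemma~\ref{lem:violated_edge_deletion_bound} is invoked, $\dg(u)$ refers to the post-deletion value, whereas in $G'$ handed to the recursive call the effective out-degree of $u$ has been restored to its pre-deletion value by adding the flipped copy. Being explicit about this distinction is what makes the numerical inequality $\dg(v') = \dg(u)+1$ in $G'$ come out correctly. Once this is in place, the inductive hypothesis applied to $G'$ completes the proof.
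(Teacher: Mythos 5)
Your proof is correct and takes essentially the same route as the paper: induction on recursion depth, observation that all out-degrees are identical in $G$ and $G'$, and invocation of Lemma~\ref{lem:violated_edge_deletion_bound} to certify the flipped copy of $(u,v')$. The only difference is cosmetic: you explicitly check both parallel copies of $(u,v')$ (noting that for the original copy $\orient{v'}{u}$ the bound $\dg(v')=\dg(u)+1$ in $G'$ also certifies validity), whereas the paper folds the original copy into the ``unchanged edges'' case and only explicitly verifies the newly added $\orient{u}{v'}$ copy; both are sound.
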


\begin{proof}
The proof is by induction on the number of recursive steps of the algorithm. When edge $(u,v)$ is deleted and removed, any violated edge must be incoming into $u$. The algorithm finds such an edge $(v',u)$ if it exists by using $H_u$, and once it finds $(v',u)$, that edge is duplicated with a flipped orientation ($\orient u v'$), thereby reverting $\dg(u)$ to its original value prior to the deletion of $(u,v)$. All edges of the new multi-graph other than the new copy of $(u,v')$ are valid by induction, and the new copy of $(u,v')$ is valid as prior to the recursive call we had $\dg(v')= \dg(u)+1$ due to Lemma~\ref{lem:violated_edge_deletion_bound}.
\end{proof}

As a direct consequence of Lemma~\ref{lem:valid_graph_deletion_recursive_call} we have the following.

\begin{lemma}\label{cor:deletion_correct}
At the end of the execution of \rd \ on an input graph which has an orientation satisfying Invariant~\ref{inv:degree_inequality}, Invariant~\ref{inv:degree_inequality} holds for the resulting graph and orientation.
\end{lemma}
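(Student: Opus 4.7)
The plan is to derive the claim by induction on the depth of recursion of \rd, using Lemma~\ref{lem:valid_graph_deletion_recursive_call} as the engine. First, I would argue termination: by an argument analogous to Lemma~\ref{lem:insertion_main} combined with Lemma~\ref{lem:violated_edge_deletion_bound}, each recursive call strictly increases the out-degree of the vertex playing the role of~$u$ (since the new $u$ is the old~$v'$ with $\dg(v')=\dg(u)+2$ at the moment of the flip), so the recursion terminates after at most $\Delta+1$ steps. Thus it suffices to prove the statement by strong induction on the number of remaining recursive calls.

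For the base case, consider an execution of \rd$(G,(u,v))$ in which the condition in line~4 does not succeed (so no recursive call is made). By Lemma~\ref{lem:valid_graph_deletion_recursive_call}, the input $G$ satisfies Invariant~\ref{inv:degree_inequality}. After line~1 removes the oriented edge $\orient u v$, the only edges whose validity status could change are those incident to~$u$: outgoing edges from~$u$ remain valid because $\dg(u)$ has decreased, and incoming edges $\orient{v'}u$ could possibly become violated only when $\dg(v')=\dg(u)+2$ after the removal. The condition in line~4 is tested against the element of $H_u$ with the maximum key, i.e., the incoming neighbor $v'$ of maximum out-degree; if this test fails, then no incoming edge into~$u$ is violated either, so all edges of the resulting graph are valid. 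Lines~9--13 only perform bookkeeping in the auxiliary structures $H_{\cdot}$ and do not alter any orientation, so Invariant~\ref{inv:degree_inequality} holds upon return.

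For the inductive step, suppose the statement is true for all executions with strictly fewer remaining recursive calls. Consider an execution of \rd$(G,(u,v))$ whose condition in line~4 does succeed on some~$v'$. By Lemma~\ref{lem:valid_graph_deletion_recursive_call}, the multi-graph $G'$ on which we invoke \rd$(G',(v',u))$ in line~7 satisfies Invariant~\ref{inv:degree_inequality}. Applying the induction hypothesis to this recursive call, we conclude that Invariant~\ref{inv:degree_inequality} holds for the graph and orientation that result from it, and since the calling context returns immediately after (line~8) without altering any orientation, the invariant holds at the end of the outer call as well.

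The only potentially delicate point is the base-case argument, in that one must verify that querying $H_u$ via {\sf ReportMax} truly certifies the absence of a violated incoming edge, and not merely the validity of the particular $v'$ returned. This follows because among all $v'\in N^-(u)$ the one maximizing $\dg(v')$ is the most likely to produce a violation; if even this vertex satisfies $\dg(v')\leq \dg(u)+1$, then every incoming edge into~$u$ is valid. Everything else in the proof is a mechanical unwinding of the recursion.
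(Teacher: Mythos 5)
Your proof is correct and takes essentially the same route as the paper: the paper simply states the result as "a direct consequence of Lemma~\ref{lem:valid_graph_deletion_recursive_call}," implicitly relying on exactly the induction on recursion depth that you spell out (with termination via Lemma~\ref{lem:violated_edge_deletion_bound}, the inductive engine from Lemma~\ref{lem:valid_graph_deletion_recursive_call}, and the {\sf ReportMax}-based base case). You have merely made explicit the details the paper leaves to the reader.
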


Next, we bound the total number of re-orientations performed and time spent during the deletion process.

\begin{lemma}\label{lem:deletion_main}
The total number of recursive calls (and hence re-orientations) of \rd \ due to a deletion of an edge from $G$ is at most $\Delta +1$, and the total runtime is bounded by $O({\Delta })$.
\end{lemma}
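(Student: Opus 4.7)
The plan is to mirror the argument used for insertions in Lemma~\ref{lem:insertion_main}, but to track the out-degree of the ``source'' vertex of each recursive call as a strictly \emph{increasing} potential rather than a strictly decreasing one. Throughout, the input invariant (Invariant~\ref{inv:degree_inequality} holds on $G$) lets me appeal to Lemma~\ref{lem:violated_edge_deletion_bound} at every recursive frame.

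First I would bound the recursion depth. Consider an outer call \rd$(G,(u,v))$ in which the condition on line~4 succeeds for some neighbor $v'$, triggering the recursive call \rd$(G',(v',u))$. Let $d$ denote $\dg(u)$ at the instant the condition is tested, i.e., after line~1 has removed the edge $(u,v)$. By Lemma~\ref{lem:violated_edge_deletion_bound}, $\dg(v')=d+2$ at that moment. Line~5 then adds the copy $\orient u {v'}$, which alters $\din(v')$ but leaves $\dg(v')$ unchanged. Inside the recursive call, line~1 removes the copy $\orient {v'}{u}$, lowering $\dg(v')$ by exactly~$1$. Hence at the condition-check inside the recursive call, the out-degree of the new source $v'$ is $d+1$, strictly greater than the out-degree $d$ of the outer source at its own check. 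Iterating this reasoning, the source's out-degree at condition-check time grows by at least one at every recursion level, and since it cannot exceed $\Delta$, there can be at most $\Delta+1$ recursive invocations.

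Next I would account for the running time. Every recursive frame performs only a constant number of heap operations before either recursing or terminating: \textsf{Delete} on line~2, \textsf{ReportMax} on line~3, and, if the condition holds, \textsf{Insert} on line~6. By Lemma~\ref{lem:structure} each of these costs $O(1)$. The key point here is that \textsf{ReportMax} is $O(1)$: unlike the insertion algorithm, there is no need to scan $N^+(u)$ in order to locate a violated edge incident to $u$, which is precisely why the deletion cost improves from $O(\Delta^2)$ to $O(\Delta)$. The terminal frame, reached when the line~4 condition fails, also executes lines~10--13, whose for-loop scans $N^+(u)$ with $O(1)$ \textsf{Decrement} calls at cost $O(\Delta)$, followed by \textsf{DecrementCenter}, which by Lemma~\ref{lem:structure} takes $O(k_0)=O(\dg(u))=O(\Delta)$. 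Summing $O(1)$ over the at most $\Delta$ non-terminal frames with $O(\Delta)$ for the terminal one yields $O(\Delta)$ overall.

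The most delicate point I expect to spell out is the consistency of the $H$-data structures at each recursive entry, so that \textsf{ReportMax}$(H_u)$ genuinely returns an incoming neighbor $v'$ maximizing $\dg(v')$ at the moment of the test on line~4. This relies on the deliberate bookkeeping convention discussed in the paper's remark after Algorithm~\ref{alg2}: the key representing $\dg(u)$ in the relevant heaps is not touched between line~1 and the condition, so the comparison $\dg(v')>\dg(u)+1$ is performed with correct current values; meanwhile lines~2 and~6 keep $H_v$ and $H_{v'}$ in sync with the edge changes in lines~1 and~5. Once these invariants are in place, combining the depth bound with the per-frame cost analysis gives the claimed $O(\Delta+1)$ reorientations and $O(\Delta)$ total running time.
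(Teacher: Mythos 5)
Your proof is correct and follows essentially the same route as the paper's: you track the strictly increasing out-degree of the recursion's source vertex (invoking Lemma~\ref{lem:violated_edge_deletion_bound} to establish the per-level increment of exactly one) to bound the recursion depth by $\Delta+1$, and you account for $O(1)$ heap work per non-terminal frame plus $O(\Delta)$ for the terminal frame. The only cosmetic difference is that you measure the source's out-degree at the moment of the line-4 test rather than at the start of each call, but the two snapshots differ by a fixed offset and yield the same bound.
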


\begin{proof}
Consider an execution of \rd$(G,(u,v))$ and let $x$ be the out-degree of $u$ at the beginning of this execution. Furthermore, assume that during this execution we reach line 7 to call \rd$(G',(v',u))$ (notice that the difference between $G$ and $G'$ is two edges), and let $y$ be the out-degree of $v'$ at the start of this recursive call. Then due to Lemma~\ref{lem:violated_edge_deletion_bound} and the fact that we added a copy of $(u,v')$ in line 5 of \rd \ oriented from $u$ to $v'$, it must be that $y=x+1$. In other words, in each consecutive recursive call the out-degree of the first vertex variable is incremented, and so the maximum number of recursive steps is bounded by the maximum out-degree in the graph.

The running time of each recursive step is $O(1)$ due to Lemma~\ref{lem:structure}, yielding a total of $O(\Delta)$ time for the entire process. Finally, lines 9-11 are executed only once and take a total of $O(\Delta)$ time.

\end{proof}

\subsection{Conclusion}

\begin{theorem}\label{thm:efficient}
There exists a deterministic algorithm for maintaining an orientation of a fully dynamic
graph on $n$ vertices while supporting the following:
\begin{itemize}[noitemsep,nolistsep]
\item The maximum out-degree is $\Delta  \leq \inf_{\beta>1} \{\beta\cdot \alpha(G) + \log_\beta n\}$,
\item The worst-case time to execute an edge insertion is $O(\Delta^2)$,
\item The worst-case time to execute an edge deletion is $O(\Delta)$, and
\item The worst-case number of orientations performed per update is $\Delta+1$.
\end{itemize}
\end{theorem}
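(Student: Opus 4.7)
\medskip
\noindent\textbf{Proof plan.} The plan is to assemble Theorem~\ref{thm:efficient} as a direct consequence of the lemmas already established for \ri\ and \rd\ in Section~\ref{sec:algo}, together with the structural bound from Theorem~\ref{lem:invar}. First I would set up the induction on the update sequence $G_0,G_1,G_2,\ldots$ described at the start of Section~\ref{sec:algo}. The empty graph $G_0$ vacuously satisfies Invariant~\ref{inv:degree_inequality}, so it serves as the base case. For the inductive step, $G_i$ is obtained from $G_{i-1}$ by either an insertion or a deletion of a single edge, so I would invoke Lemma~\ref{cor:insertion_correct} or Lemma~\ref{cor:deletion_correct} respectively, to conclude that Invariant~\ref{inv:degree_inequality} continues to hold after the update.

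Next I would translate the maintained invariant into the claimed upper bound on $\Delta$. Since Invariant~\ref{inv:degree_inequality} asserts that \emph{every} outgoing edge of every vertex $w$ is valid, it trivially implies that at least $\min\{\dg(w),\gamma\}$ of them are valid, i.e., Invariant~\ref{weakest}. Theorem~\ref{lem:invar} then yields $\Delta \le \beta\cdot\alpha(G) + \lceil\log_\beta n\rceil$ for every choice of $\beta>1$, and taking the infimum gives the first bullet of the theorem. This step requires no new work beyond observing that Invariant~\ref{inv:degree_inequality} is strictly stronger than Invariant~\ref{weakest}, which is exactly the relationship noted in Section~\ref{sec:invariant}.

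For the running time and reorientation bounds, I would directly plug in Lemmas~\ref{lem:insertion_main} and~\ref{lem:deletion_main}, which state that \ri\ terminates after at most $\Delta+1$ recursive calls in total time $O(\Delta^2)$, and that \rd\ terminates after at most $\Delta+1$ recursive calls in total time $O(\Delta)$. Each recursive call corresponds to exactly one reorientation of an edge (a flip in the insertion case, or a flip induced by the duplicate-and-delete step in the deletion case), so the total number of reorientations per update is at most $\Delta+1$, matching the last bullet. All algorithmic steps are deterministic, so the resulting scheme is deterministic.

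I do not anticipate a significant obstacle: the theorem is essentially a packaging statement that collects the correctness lemmas (Lemmas~\ref{cor:insertion_correct} and~\ref{cor:deletion_correct}), the runtime lemmas (Lemmas~\ref{lem:insertion_main} and~\ref{lem:deletion_main}), and the structural consequence of the invariant (Theorem~\ref{lem:invar}). The only mildly subtle point is the inductive framing, namely that the runtime bounds and the out-degree bound are stated in terms of $\Delta$ at the time of the update, which is consistent because Invariant~\ref{inv:degree_inequality} holds at that moment and therefore the $\Delta$ used by the runtime analysis is already bounded by $\inf_{\beta>1}\{\beta\alpha(G)+\log_\beta n\}$ via Theorem~\ref{lem:invar}.
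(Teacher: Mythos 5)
Your proposal is correct and takes essentially the same route the paper implicitly follows: Theorem~\ref{thm:efficient} is indeed a packaging statement assembled by induction on the update sequence (with $G_0$ vacuously satisfying Invariant~\ref{inv:degree_inequality}), preservation of the invariant via Lemmas~\ref{cor:insertion_correct} and~\ref{cor:deletion_correct}, the out-degree bound from Theorem~\ref{lem:invar} applied through the observation that Invariant~\ref{inv:degree_inequality} subsumes Invariant~\ref{weakest}, and the time/reorientation bounds from Lemmas~\ref{lem:insertion_main} and~\ref{lem:deletion_main}. The paper does not spell out a separate proof for this theorem, so your write-up faithfully makes explicit exactly the bookkeeping the authors intended.
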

\noindent{\bf Remark:} The statement of Theorem~\ref{thm:efficient} is valid without any knowledge of an upper-bound $\alpha$ on $\alpha(G)$, where $\alpha(G)$ may change as the graph changes.
To see this, notice that the algorithms presented in this Section do not make any assumption on the arboricity (which stands in contrast to the more efficient algorithms that will be presented in Section~\ref{sec:efficient_algo}).

\section{A More Efficient Algorithm}\label{sec:efficient_algo}
In this section we present a more efficient algorithm that improves the insertion update time from $O(\Delta^2)$ to $O(\gamma \cdot \Delta)$, without increasing any of the other measures.

\subsection{An Intermediate Invariant}

So far we have introduced two invariants. On one extreme, the stronger Invariant~\ref{inv:degree_inequality} guarantees that
all edges are valid, and this led to our simple algorithm in Section~\ref{sec:algo}. On the other extreme,
the weaker Invariant~\ref{weakest} only guarantees that $\gamma$ outgoing edges of each vertex are valid. On an intuitive level, the benefit of having the weaker Invariant~\ref{weakest} being maintained comes into play during the insertion process of edge $(u,v)$ that is oriented as $\orient u v$, where instead of scanning all of the outgoing edges of $u$ looking for a violated edge, it is enough to scan only $\gamma$ edges. If such a guarantee could be made to work, the insertion update time would be reduced to $O(\gamma\ \cdot \Delta)$. However, it is unclear how to efficiently maintain Invariant~\ref{weakest} as deletions take place. Specifically, when one of the $\gamma$ outgoing valid edges of a vertex is deleted, it is possible that there is no other valid outgoing edge to replace it.

Our strategy is not to maintain Invariant~\ref{weakest} directly, but rather to define and maintain an intermediate invariant (see Invariant~\ref{inv:intermediate}), which is stronger than Invariant~\ref{weakest} but still weak enough so that we only need to scan $\gamma$ outgoing edges of $u$ during the insertion process. The additional strength of the intermediate invariant will assist us in efficiently supporting deletions. Before stating the invariant, we define the following. For any $i \ge 1$, an edge $(u,v)$ oriented as $\orient u v$ is called \emph{$i$-valid} if $\dg(v) \ge \dg(u) - i$; if it is not $i$-valid then it is \emph{$i$-violated}. We also say that a vertex $w$ is \emph{spectrum-valid} if the set $E_w$ of its outgoing edges can be partitioned into $q=q_w=\lceil\frac{|E_w|}{\gamma}\rceil$ sets $E_w^1,\cdots,E_w^{q}$ such that for each $1\leq i \leq q$, the following holds: (1) $|E_w^i|=\gamma$ (except for the residue set $E_w^{q}$ which contains the remaining $|E_w|-(q-1)\cdot \gamma$ edges, i.e., $|E_w^{q}| = |E_w|-(q-1)\cdot \gamma$), and (2) all edges in $E_w^i$ are $i$-valid. If a vertex is not spectrum-valid then it is \emph{spectrum-violated}.

\begin{inv} \label{inv:intermediate}
Each vertex $w$ is spectrum-valid.
\end{inv}

We will call $E_w^1$ ($E_w^{q}$) the first (last) set of edges for $w$. To give some intuition as to why Invariant~\ref{inv:intermediate} helps us support deletions efficiently, notice that once an edge $(u,v)$ that is oriented as $\orient u v$ is deleted and needs to be replaced, it will either be replaced by a flip of some violated incoming edge (which will become valid after the flip), or it can be replaced by one of the edges from $E_u^2$, as these edges were previously 2-valid, and after the deletion they are all 1-valid. We emphasize already here that during the insertion process we do not scan the $\gamma$
edges of the first set (i.e., those that are guaranteed to be 1-valid prior to the insertion),
but rather scan the $\gamma$ (in fact, $\gamma-1$) edges of the last set (and possibly of the set before last)
that are only guaranteed to be $q$-valid.
This somewhat counter-intuitive strategy is described and explained in detail in Section~\ref{subsec:eff_insertion}.

In order to facilitate the use of Invariant~\ref{inv:intermediate}, each vertex $w$ will maintain its outgoing edges in a doubly linked list $\mathcal{L}_w$. We say that $\mathcal{L}_w$ is \emph{valid} if for every $1\leq i \leq q$, the edges between location $\gamma\cdot (i-1)+1$ and location $(\gamma\cdot i)$ in the list are all $i$-valid. These locations for a given $i$ are called the $i$-block of $\mathcal{L}_w$. So, in a valid $\mathcal{L}_w$ the first location must be $1$-valid and belongs to the $1$-block, the last location must be $q$-valid and belongs to the $q$-block, etc. Note that for $i=q$
the number of locations (i.e., $|E_w|-(q-1)\cdot \gamma$) may be smaller than $\gamma$.
If $\mathcal{L}_w$ is not valid then it is \emph{violated}.

\subsection{Insertions}\label{subsec:eff_insertion}

Suppose that edge $(u,v)$ is added to $G_{i-1}$ thereby obtaining $G_i$. The process of inserting the new edge is performed as in Section~\ref{sec:algo} with the following modifications. Instead of scanning all outgoing edges of $u$ in order to find a violated edge, we only scan the \emph{last} $\gamma-1$ edges in $\mathcal{L}_u$; if there are less than $\gamma-1$ edges then we scan them all. If one of these edges, say $(u,v')$, is violated then we remove $(u,v')$ from the graph, replace $(u,v')$ with $(u,v)$ in $\mathcal{L}_u$, and recursively insert $(u,v')$ with the flipped orientation (just like in Section~\ref{sec:algo}). If all of these edges are valid, we move them together with the new edge $(u,v)$ to the front of $\mathcal{L}_u$.

The full details of the insertion process appear in Algorithm~\ref{alg3}.

\begin{algorithm}
\caption{\eri$(G,(u,v))$}
/* Assume without loss of generality that $\dg(u) \leq \dg(v) $ */
\begin{algorithmic}[1]
\label{alg3}
\STATE add $(u,v)$ to $G$ with orientation $\orient u v$
\STATE {\sf Insert}$(H_{v},u,\dg(u)-1)$  /* this key will be incremented in line 12 if needed */
\STATE Let $S_u$ be the last $min(\dg(u),\gamma-1)$ edges in $\mathcal{L}_u$.
\FOR {$v'\in S_u$}
\IF {$\dg(u) > \dg(v')+1$ }
\STATE remove $(u,v')$ from $G$ /* now edge $(u,v')$ is missing */
\STATE {\sf Delete}$(H_{v'}, u)$
\STATE replace $(u,v')$ with $(u,v)$ in $\mathcal{L}_u$
\STATE \eri$(G,(v',u))$ /* recursively insert edge $(u,v')$, but orient it as $\orient {v'} u$ */
\STATE return
\ENDIF
\ENDFOR
\FOR {$v'\in N^+(u)$}
\STATE {\sf Increment}$(H_{v'},u)$ \ENDFOR
\STATE {\sf IncrementCenter}$(H_{u})$
\STATE move $S_u$ to the front of $L_u$
\STATE add $(u,v)$ to the front of $L_u$

\end{algorithmic}
\end{algorithm}

\paragraph{Correctness and Running Time}

First, we remark that an appropriate variation of Observation~\ref{obs:violated_edge_insertion_bound} is immediately true here as well, but this time for \eri \  and line 5. What needs to be proven is an appropriate variation of Lemma~\ref{lem:valid_graph_insertion_recursive_call} which we describe next.

\begin{lemma}\label{lem:eff_valid_graph_insertion_recursive_call}
 During the execution of \eri$(G,(u,v))$ on a graph $G$ whose orientation satisfies Invariant~\ref{inv:intermediate} and for every vertex $w$ the list $\mathcal{L}_w$ is valid, each recursive call made to \eri \ has an input graph with an orientation that satisfies Invariant~\ref{inv:intermediate} and for every vertex $w$ the list $\mathcal{L}_w$ is valid.
\end{lemma}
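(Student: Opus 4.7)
The plan is to prove the lemma by induction on the depth of recursion, with the initial call serving as the base case (where the hypotheses on $G$ hold by assumption). For the inductive step, consider a call \eri$(G,(u,v))$ whose input graph satisfies Invariant~\ref{inv:intermediate} and for which every $\mathcal{L}_w$ is valid, and suppose the scan in lines 3--12 finds some edge $(u,v')\in S_u$ with $\dg(u)>\dg(v')+1$, so that the algorithm proceeds to the recursive call in line 9. I would compare the graph state immediately prior to the recursive call with the state at entry to the current call. The only modifications made between these two moments are: (i) removing the edge $(u,v')$, previously oriented $u\to v'$, from $G$ and from $\mathcal{L}_u$; and (ii) inserting $(u,v)$ into $G$ oriented $u\to v$ and placing it into $\mathcal{L}_u$ at the exact position previously held by $(u,v')$.

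The key observation is that both modified edges are outgoing from $u$, so $\dg(u)$ is the same as at the start of the current call; moreover, $v$ and $v'$ are the heads of the changed edges (not their tails), so no other vertex's out-degree is altered. Consequently, for every vertex $w\neq u$, the edge set of $\mathcal{L}_w$ is unchanged, $\dg(w)$ is unchanged, and the out-degrees of all of $w$'s outgoing neighbors are unchanged; hence $\mathcal{L}_w$ remains valid and $w$ remains spectrum-valid.

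It remains to verify $\mathcal{L}_u$. Let $p$ be the position that $(u,v')$ occupied in $\mathcal{L}_u$ and let $i$ be the index of the block containing $p$. Every edge other than the one at position $p$ is unchanged and, since $\dg(u)$ is unchanged, remains $j$-valid for its respective block $j$. The only new edge in the list is $(u,v)$ at position $p$, and I would argue it is $i$-valid by invoking the WLOG assumption $\dg(u)\leq \dg(v)$ stated at the top of Algorithm~\ref{alg3}, which gives $\dg(v)\ge \dg(u)$ and thus makes $(u,v)$ in fact $0$-valid (and a fortiori $i$-valid for any $i\ge 1$). Since the total number of edges in $\mathcal{L}_u$ is unchanged, the block boundaries are identical; therefore $\mathcal{L}_u$ remains valid, which in turn implies that $u$ is spectrum-valid.

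I expect the main subtlety to be confirming that no cascading effects occur on the lists $\mathcal{L}_w$ of vertices $w$ satisfying $u\in N^+(w)$, where the $j$-validity of an edge $(w,u)$ depends on $\dg(u)$. This is handled by observing that although $\dg(u)$ is transiently incremented by line 1, it is exactly reset by the removal of $(u,v')$ in line 6, so at the moment the recursive call is made the graph state reflects an unchanged $\dg(u)$ and those lists are unaffected. The argument thus reduces to local bookkeeping, relying crucially on the algorithm's deferred key-update convention (lines 13--15 are skipped precisely when the recursive branch is taken).
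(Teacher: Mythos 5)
Your proof is correct and follows essentially the same inductive bookkeeping argument as the paper: the removal of $(u,v')$ restores $\dg(u)$ to its value at call entry, and replacing $(u,v')$ with the new (in fact $0$-valid) edge $(u,v)$ at the same position in $\mathcal{L}_u$ preserves the list's validity, while every other vertex's list and out-degree are untouched. The paper's version is terser (and contains a typo, writing $(u,v')$ where it means the newly inserted edge $(u,v)$), but the underlying argument is identical.
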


\begin{proof}
The proof is by induction on the number of recursive steps of the algorithm. When edge $(u,v)$ is inserted and oriented
as $\orient u v$, the only vertex in the graph which may be spectrum-violated is $u$. The algorithm brute-force checks $\gamma-1$ outgoing edges of $u$, specifically, the $\gamma -1$ last edges in ${\cal L}_u$.
If a violated edge $(u,v')$ edge is found, that edge is removed thereby reverting $\dg(u)$ to its original value prior to the insertion of edge $(u,v)$. In addition, the new edge $(u,v)$ replaces $(u,v')$ in $\mathcal{L}_u$. What remains to be proven is that $\mathcal{L}_u$ is valid prior to the recursive call in line 9, as the rest of the lists are all valid by induction. To see that $\mathcal{L}_u$ is indeed valid, notice that the edge $(u,v')$ must be $1$-valid, and hence having it replace $(u,v')$ will keep $\mathcal{L}_u$ valid as well.
\end{proof}

\begin{lemma}\label{lemma:eff_insertion_correct}
At the end of the execution of \eri \ on an input graph which has an orientation satisfying Invariant~\ref{inv:degree_inequality} and where for every vertex $w$ the list $\mathcal{L}_w$ is valid, Invariant~\ref{inv:intermediate} holds for the resulting graph and orientation, and every vertex $w$ in the resulting graph has a valid $\mathcal{L}_w$.
\end{lemma}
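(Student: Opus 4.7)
The plan is to prove the claim by induction on the number of recursive calls that \eri performs; this number is finite (by the analogue of Lemma~\ref{lem:insertion_main}) and, since each call recurses at most once, it equals the recursion depth. The inductive step is essentially immediate: if the current invocation triggers the recursive call in line~9 on some edge $(v',u)$, then Lemma~\ref{lem:eff_valid_graph_insertion_recursive_call} guarantees that the input to this recursive call satisfies Invariant~\ref{inv:intermediate} and has every $\mathcal{L}_w$ valid. By the inductive hypothesis, the recursive call produces a graph with the desired properties, and since the current invocation immediately returns in line~10 without further modifications, its output enjoys these properties as well.

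Thus the real content lies in the base case, where no edge in $S_u$ is $1$-violated and the algorithm executes lines 13--17. Let $k$ and $d = k+1$ denote the out-degree of $u$ before and after the new orientation $\orient{u}{v}$ is added. The only vertex whose out-degree changes is $u$, and the only list that is modified is $\mathcal{L}_u$; hence it suffices to show that after the reorganization $\mathcal{L}_u$ is valid, since list-validity of $\mathcal{L}_u$ immediately implies spectrum-validity of $u$. Because the loop in lines 4--12 did not trigger, every edge in $S_u$ is $1$-valid with respect to $d$, and $(u,v)$ is $1$-valid since $\dg(u)\le\dg(v)$ held before the insertion. Hence the first $|S_u|+1 \le \gamma$ positions of the new $\mathcal{L}_u$ all host $1$-valid edges, which covers (at most) the entire $1$-block.

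The main obstacle is to handle the old edges that lie outside $S_u$. Moving $S_u$ to the front and then prepending $(u,v)$ shifts every non-$S_u$ edge exactly $\gamma$ positions to the right while preserving the relative order, so such an edge that previously lay in old block $b$ now lies in new block $b+1$. The crucial identity is that being $b$-valid with respect to the old out-degree $d-1$ is precisely equivalent to being $(b+1)$-valid with respect to the new out-degree $d$. Since the old $\mathcal{L}_u$ was valid by hypothesis, every non-$S_u$ edge in old block $b$ was $b$-valid, hence is $(b+1)$-valid in its new block; the possibly short last block is treated identically. Combined with the first-block analysis above, this shows that $\mathcal{L}_u$ is valid, completing the base case and hence the induction.
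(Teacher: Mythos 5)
Your proof is correct and follows essentially the same approach as the paper's: use Lemma~\ref{lem:eff_valid_graph_insertion_recursive_call} to reduce to the terminal (non-recursing) call, then argue that the new $\mathcal{L}_u$ is valid because the edge $(u,v)$ and the $S_u$ edges fill the $1$-block while every other edge is shifted one block deeper, and $b$-validity relative to the old out-degree is exactly $(b+1)$-validity relative to the new one. You merely make the induction and the shift-by-$\gamma$ position arithmetic more explicit than the paper does.
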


\begin{proof}
The end of the execution of \eri \ can only happens once we reach line 11 (when calling \eri \ with edge $(u,v)$ such that $\dg(u) \le \dg(v)$). This means that we have checked the last $\gamma-1$ edges in $\mathcal{L}_u$ and they are all valid.
Let $d$ and $d'$ denote the value of $\dg(u)$ immediately before and after the insertion of $(u,v)$, respectively,
where $d' = d+1$.
Notice that if an edge $(u,w)$ that is oriented $\orient u w$ was $i$-valid before the insertion of $(u,v)$, for each $i$, it is possible that it is no longer $i$-valid after the insertion, but it must be $(i+1)$-valid. Also, due to Lemma~\ref{lem:eff_valid_graph_insertion_recursive_call}, the list $\mathcal{L}_u$ was valid at the start of this call to \eri. By moving the $\gamma-1$ examined edges from the end of $\mathcal{L}_u$ to its front, in addition to adding the new edge $(u,v)$ to the front of $\mathcal{L}_u$, we have pushed all of the other edges down a block (from the $i$-block to the $(i+1)$-block, for each $i$), and so $\mathcal{L}_u$ is valid after line 15. This immediately implies that the vertex $u$ is spectrum-valid after line 15 is executed. Furthermore, every vertex $w\neq u$ in the graph has a valid $\mathcal{L}_w$ by Lemma~\ref{lem:eff_valid_graph_insertion_recursive_call}, and so Invariant~\ref{inv:intermediate} holds.
\end{proof}

\begin{lemma}\label{lem:eff_insertion_main}
The total number of recursive calls (and hence re-orientations) of \eri \ due to an insertion into $G$ is at most $\Delta +1$, and the total time spent is bounded by $O(\gamma\cdot \Delta )$.
\end{lemma}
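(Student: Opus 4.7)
The plan is to mirror the two-step structure of the analysis of Lemma~\ref{lem:insertion_main}. First, I would bound the depth of the recursion by $\Delta+1$ using a ``pivot out-degree'' potential, and second, I would bound the per-call work by $O(\gamma)$, plus an additive $O(\Delta)$ contribution paid only once at the terminal call.

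For the depth bound, I would establish the analog of Observation~\ref{obs:violated_edge_insertion_bound}: during a call \eri$(G,(u,v))$, if $d_u$ denotes the value of $\dg(u)$ immediately before line~1, then whenever the condition in line~5 triggers for some $(u,v')\in S_u$, we have $\dg(v') \le d_u - 1$. Indeed, at the time of the test $\dg(u) = d_u + 1$ and $\dg(u) > \dg(v') + 1$. Since the subsequent recursive call \eri$(G',(v',u))$ operates on a graph $G'$ that differs from $G$ only in the removal of an edge outgoing from $u$, the value of $\dg(v')$ is identical in $G$ and $G'$, so the pivot out-degree of the recursive call is at most $d_u - 1 < d_u$. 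Because this pivot is always nonnegative and starts at most $\Delta$, the recursion depth is at most $\Delta + 1$; each call flips at most one edge, which yields the claimed bound on re-orientations.

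For the running time, I would analyze the work of a single call. A non-terminal call performs the scan of at most $|S_u| \le \gamma - 1$ edges together with a constant number of $O(1)$-time operations from Lemma~\ref{lem:structure} (an {\sf Insert} in line~2 and a {\sf Delete} in line~7), for a total of $O(\gamma)$. The terminal call additionally executes lines 12--16: the {\sf Increment} loop (lines 12--13) contributes $O(\dg(u))$, {\sf IncrementCenter} in line~14 contributes $O(\dg(u))$ by Lemma~\ref{lem:structure}, and lines 15--16 perform two $O(1)$-time splicings in the doubly linked list $\mathcal{L}_u$ using pointers to the endpoints of $S_u$ recorded during the scan. Summing over the at most $\Delta+1$ recursive calls gives total time $(\Delta+1)\cdot O(\gamma) + O(\Delta) = O(\gamma\cdot\Delta)$.

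The main technical point that requires care is showing that moving $S_u$ to the front of $\mathcal{L}_u$ in line~15 takes $O(1)$ time; this hinges on $\mathcal{L}_u$ being doubly linked and on $S_u$ being a contiguous suffix whose boundary pointers have been recorded during the scan, so that a detach-and-prepend suffices without an additional traversal of $S_u$. A minor edge case to note is that when $\dg(u) < \gamma - 1$ the set $S_u$ consists of all outgoing edges of $u$, and the analysis above applies verbatim with $|S_u| = \dg(u) \le \gamma - 1$.
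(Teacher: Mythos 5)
Your proof is correct and follows essentially the same approach as the paper, whose proof of this lemma is terse and simply defers to the analysis of \ri\ (Lemma~\ref{lem:insertion_main}). You spell out precisely the two ingredients that proof implicitly reuses: the strictly decreasing pivot out-degree that bounds the recursion depth by $\Delta+1$, and the per-call cost breakdown ($O(\gamma)$ for non-terminal calls, an extra $O(\Delta)$ only at the terminal call). One point in your favor: you correctly observe that for \eri\ only the one-sided inequality $\dg(v') \le d_u - 1$ is needed (and in fact that is all that holds in general here — since the scanned edges of $S_u$ are only guaranteed $q$-valid under Invariant~\ref{inv:intermediate}, the exact equality $\dg(u) = \dg(v')+2$ of Observation~\ref{obs:violated_edge_insertion_bound} can fail), whereas the paper's appeal to an ``appropriate variation'' of that observation is left vague. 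Your explicit handling of the $O(1)$-time detach-and-prepend for $S_u$ in $\mathcal{L}_u$ also correctly justifies a step the paper takes for granted. (Minor: your line references to Algorithm~\ref{alg3} are shifted by one — the \textsf{Increment} loop is lines 11--12, \textsf{IncrementCenter} is line 13, and the list operations are lines 14--15.)
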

\begin{proof}
For the runtime, notice that just like in Lemma~\ref{lem:insertion_main}, the number of re-orientations during an insertion process is still bounded by $O(\Delta)$. But now, each recursive step costs $O(\gamma)$ time, except for the last step which costs $O(\Delta)$ time.
\end{proof}

\subsection{Deletions}\label{subsec:eff_deletion}

Suppose that edge $(u,v)$ is deleted from $G_{i-1}$ thereby obtaining $G_i$. The process of deleting the edge is performed as in Section~\ref{sec:algo} with the following modifications. If an edge incoming into $u$, say $(u,v')$, is violated and is flipped (just like in Section~\ref{sec:algo}), then we replace $(u,v)$ with $(u,v')$ in $\mathcal{L}_u$ and continue recursively to delete the original copy of $(u,v')$. If all incoming edges of $u$ are valid, we remove $(u,v)$ from $\mathcal{L}_u$.
The full details of the deletion process appear in Algorithm~\ref{alg4}.

\begin{algorithm}
\caption{\erd$(G,(u,v))$}
/* Assume without loss of generality that edge $(u,v)$ is oriented as $\orient u v$.
If there is another copy of $(u,v)$ oriented as $\orient v u$, assume without loss of generality that $\dg(u) \geq \dg(v) $ */
\begin{algorithmic}[1]
\label{alg4}
\STATE remove $(u,v)$ from $G$ /* if there are two copies of $(u,v)$, delete the one oriented as $\orient u v$ */
\STATE {\sf Delete}$(H_{v},u)$
\STATE $v' \leftarrow$ ReportMax$(H_{u})$
\IF {$\dg (v') > \dg(u)+1$}
\STATE add $(u,v')$ to $G$ with orientation $\orient u {v'}$ /* now there are two copies of $(u,v')$ */
\STATE replace $(u,v)$ with $(u,v')$ in $\mathcal{L}_v$
\STATE {\sf Insert}$(H_{v'}, u,\dg(u))$
\STATE \erd$(G,(v',u))$ /* recursively delete the copy of $(v',u)$ oriented as $\orient {v'} u$ */
\STATE return
\ENDIF
\FOR {$v'\in N^+(u)$}
\STATE {\sf Decrement}$(H_{v'},u)$
\ENDFOR
\STATE {\sf DecrementCenter}$(H_{u})$
\STATE Remove $(u,v)$ from $\mathcal{L}_v$
\end{algorithmic}
\end{algorithm}

\paragraph{Correctness and Running Time}

First, we remark that an appropriate variation of Lemma~\ref{lem:violated_edge_deletion_bound} is immediately true here as well, but this time for \erd. What remains to be proven is an appropriate variation of Lemma~\ref{lem:valid_graph_deletion_recursive_call} which we describe next.

\begin{lemma}\label{lem:eff_valid_graph_deletion_recursive_call}
 During the execution of \erd$(G,(u,v))$ on a graph $G$ whose orientation satisfies Invariant~\ref{inv:intermediate} and for every vertex $w$ the list $\mathcal{L}_w$ is valid, each recursive call made to \erd \ has an input graph with an orientation that satisfies Invariant~\ref{inv:intermediate} and for every vertex $w$ the list $\mathcal{L}_w$ is valid.
\end{lemma}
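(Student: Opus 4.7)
The plan is to prove this by structural induction on the recursion, paralleling the proof of Lemma~\ref{lem:valid_graph_deletion_recursive_call} but now for the stronger Invariant~\ref{inv:intermediate} together with the added list-validity requirement. The induction hypothesis is that the current call \erd$(G,(u,v))$ receives a graph whose orientation satisfies Invariant~\ref{inv:intermediate} and in which every $\mathcal{L}_w$ is valid; the goal is to show that when the algorithm reaches the recursive call in line 8, the graph handed to that call again satisfies both conditions. The base case is the initial call, which satisfies the assumptions by hypothesis.

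The first step will be to track how lines~1--7 perturb the state. Removing $(u,v)$ in line 1 decrements $\dg(u)$ by one, and inserting the new copy of $(u,v')$ oriented $\orient u {v'}$ in line 5 increments it back by one; the out-degrees of $v$ and $v'$ are untouched because both operations affect them only on their incoming side. Hence every vertex's out-degree is the same as before the call, and for every $w \ne u$ the set of outgoing edges is also unchanged. Consequently $\mathcal{L}_w$ remains valid and $w$ remains spectrum-valid with no further work.

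All that remains is to verify that the edit to $\mathcal{L}_u$ in line 6 (which swaps $(u,v)$ for the new copy of $(u,v')$ at the very same position, interpreting the ``$\mathcal{L}_v$'' in the pseudocode as a typographical slip for $\mathcal{L}_u$) preserves validity. Here I would invoke the direct analog of Lemma~\ref{lem:violated_edge_deletion_bound} for \erd: the line-4 guard forces $\dg(v') \ge \dg(u)+2$ at the moment of the test, when $\dg(u)$ is at its deflated post-line-1 value $d-1$. After line 5 restores $\dg(u)$ to its original value $d$, we therefore have $\dg(v') \ge d+1 > d = \dg(u)$, so the newly oriented edge $\orient u {v'}$ is $i$-valid for every $i \ge 1$, and in particular for the block index of whatever slot of $\mathcal{L}_u$ it inherits. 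Thus $\mathcal{L}_u$ is still valid, $u$ is still spectrum-valid, and the hypotheses of the recursive call are met.

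The main obstacle I expect is exactly the list-validity check in the last step: one must rule out any need to re-sort $\mathcal{L}_u$ after the swap, even though the slot inherited by $(u,v')$ may belong to a block with a large index demanding strong $i$-validity. The clean way to discharge this is the observation above — the line-4 guard essentially ensures that $(u,v')$ is among the strongest possible valid outgoing edges of $u$, so whichever slot it lands in is automatically legal, and no other structural adjustment to the lists of any other vertex is required.
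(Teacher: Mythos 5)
Your proof is essentially correct and follows the same inductive structure as the paper's: both establish that the flip in lines~5--6 restores $\dg(u)$ to its pre-deletion value while leaving all other out-degrees untouched, conclude that every $\mathcal{L}_w$ with $w\neq u$ is automatically unaffected, and then focus on showing the in-place swap of $(u,v)$ for $(u,v')$ keeps $\mathcal{L}_u$ valid. You also correctly flag the typographical slip in line~6 of Algorithm~\ref{alg4} ($\mathcal{L}_v$ should read $\mathcal{L}_u$), which the paper's own proof silently fixes.

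Where you differ slightly --- and, if anything, this makes your argument cleaner --- is in what you extract from the recursion guard. The paper appeals to ``an appropriate variation'' of Lemma~\ref{lem:violated_edge_deletion_bound} to get the \emph{exact} equality $\dg(v')=\dg(u)+1$ at the moment of the recursive call, and justifying that variation under Invariant~\ref{inv:intermediate} rather than Invariant~\ref{inv:degree_inequality} is not quite immediate (the original proof of that lemma relies on \emph{all} edges being $1$-valid in the incoming graph). Your proof avoids that dependency entirely: the line-4 test alone yields $\dg(v')>\dg(u)+1$ when $\dg(u)$ is deflated, hence $\dg(v')\geq\dg(u)+1$ once $\dg(u)$ is restored, and this inequality already makes $\orient u{v'}$ $i$-valid for every $i\geq 1$, which is all that is needed to show the swapped-in edge satisfies the block it inherits.

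One small wording error worth fixing: you say the inherited slot ``may belong to a block with a large index demanding strong $i$-validity.'' By the paper's definition an edge is $i$-valid if $\dg(v)\geq\dg(u)-i$, so \emph{larger} block index imposes a \emph{weaker} requirement; the demanding case is block~$1$. Your resolution ($(u,v')$ is $i$-valid for all $i$, hence legal in any block) covers all cases, so the conclusion is unaffected, but the sentence has the monotonicity backwards.
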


\begin{proof}
The proof is by induction on the number of recursive steps of the algorithm. When edge $(u,v)$ is deleted and removed, any spectrum-violated vertex must be an incoming neighbor of $u$. In such a case, there must exist an edge incoming into $u$ which is violated. The algorithm finds such an edge $(v',u)$ (if exists) using $H_u$, and once it finds $(v',u)$, that edge is duplicated with a flipped orientation ($\orient u v'$), thereby reverting $\dg(u)$ to its original value prior to the deletion of $(u,v)$. In addition, the new edge $(u,v')$ replaces the deleted edge $(u,v)$ in $\mathcal{L}_u$. Now, all the vertices of the new multi-graph except for $u$ are spectrum-valid and their lists are valid by induction. The list $\mathcal{L}_u$ is also valid as prior to the recursive call we had $\dg(v')= \dg(u)+1$ due to the appropriate variation of Lemma~\ref{lem:violated_edge_deletion_bound}, and so replacing $(u,v)$ with $(u,v')$ in $\mathcal{L}_u$ still maintains its validity.
\end{proof}

\begin{lemma}\label{lemma:eff_deletion_correct}
At the end of the execution of \erd \ on an input graph which has an orientation satisfying Invariant~\ref{inv:degree_inequality} and where for every vertex $w$ the list $\mathcal{L}_w$ is valid, Invariant~\ref{inv:intermediate} holds for the resulting graph and orientation, and every vertex $w$ in the resulting graph has a valid $\mathcal{L}_w$.
\end{lemma}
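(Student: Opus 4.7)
The plan is to mirror the proof of Lemma~\ref{lemma:eff_insertion_correct}, using Lemma~\ref{lem:eff_valid_graph_deletion_recursive_call} as the inductive step. An execution of \erd$(G,(u,v))$ either recurses (when the condition on line~4 succeeds) or terminates (when it fails and lines~10--14 are executed). In the recursive case I would invoke Lemma~\ref{lem:eff_valid_graph_deletion_recursive_call} to conclude that the recursive call begins on an input satisfying Invariant~\ref{inv:intermediate} with every list valid, and then induct on the recursion depth (which is finite by the deletion analogue of Lemma~\ref{lem:insertion_main}). The substantive work is therefore to analyze the terminal case.

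In the terminal case, the $v'$ returned by ReportMax has the maximum out-degree in $H_u$ and satisfies $\dg(v')\le \dg(u)+1$, where $\dg(u)$ is the post-deletion out-degree. Hence every incoming edge $(w,u)$ of $u$ is $1$-valid after the deletion. I would use this to check each list $\mathcal{L}_w$ with $w\ne u$: if $w$ is not an incoming neighbor of $u$, then $\mathcal{L}_w$ and the validity of every edge in it are untouched; if $w\in N^-(u)$, the only edge in $\mathcal{L}_w$ whose validity could have degraded is $(w,u)$ itself, and this edge is now $1$-valid, hence $i$-valid for whatever block index $i\ge 1$ it currently occupies in $\mathcal{L}_w$. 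The spectrum-validity of each such $w$ then follows immediately from the validity of $\mathcal{L}_w$.

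The main obstacle is verifying that $\mathcal{L}_u$ itself remains valid after line~14 removes $(u,v)$ (which I read as $\mathcal{L}_u$; the $\mathcal{L}_v$ in the pseudocode appears to be a typo, since $(u,v)$ is oriented $\orient u v$ and therefore resides in $\mathcal{L}_u$). Let $p$ be the position of $(u,v)$ in $\mathcal{L}_u$ before removal. Two compensating effects take place simultaneously: the decrement of $\dg(u)$ upgrades every outgoing edge of $u$ from $j$-valid to $(j-1)$-valid, while the deletion shifts every edge at position $q>p$ one slot earlier, which moves it from the $j$-block into the $j$-block (if $q>\gamma(j-1)+1$) or into the $(j-1)$-block (only in the boundary case $q=\gamma(j-1)+1$). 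A short case split shows that in either sub-case the upgraded validity exactly meets the requirement of the new block, while edges at positions $<p$ do not shift and merely become strictly more valid. The possible drop of $q_u$ by one (when $\dg(u)+1\equiv 1\pmod{\gamma}$) is absorbed automatically because the new last block is just the residue of the shifted edges. This yields $\mathcal{L}_u$ valid, $u$ spectrum-valid, and hence Invariant~\ref{inv:intermediate} restored, completing the proof.
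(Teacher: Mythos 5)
Your proof is correct and follows essentially the same route as the paper's: argue the recursive case via Lemma~\ref{lem:eff_valid_graph_deletion_recursive_call}, reduce to the terminal case, and for $\mathcal{L}_u$ observe that removing $(u,v)$ shifts each later edge one slot earlier (possibly into the previous block) while the decrement of $\dg(u)$ upgrades every outgoing edge from $i$-valid to $(i-1)$-valid, so the two effects compensate. You are slightly more explicit than the paper in two places — you spell out the case split for the block-boundary positions and you separately verify that $\mathcal{L}_w$ for $w\in N^-(u)$ stays valid (the paper only records that all incoming edges of $u$ are valid without explicitly tying this to spectrum-validity of the $w$'s) — and you also correctly flag that lines~6 and~14 of Algorithm~\ref{alg4} should read $\mathcal{L}_u$ rather than $\mathcal{L}_v$; but these are matters of exposition, not a different argument.
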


\begin{proof}
The end of the execution of \erd \ can only happens once we reach line 10 (when calling \erd \  with edge $(u,v)$). This means that all incoming edges of $u$ must be valid. Also, due to Lemma~\ref{lem:eff_valid_graph_deletion_recursive_call}, the list $\mathcal{L}_u$ was valid at the start of this call to \eri. If $(u,v)$ was in the $j$-block of $\mathcal L_u$ then for each $i>j$ as a result of removing $(u,v)$ from $\mathcal L_u$, there is one edge that moves from the $i$-block to the $(i-1)$ block. Notice that this movement does not take any additional time as the definition of blocks depends on locations only. Being that if an edge $(u,w)$ oriented $\orient u w$ was $i$-valid before the removal of $(u,v)$ it is necessarily $(i-1)$-valid after the removal (as the degree of $u$ was decremented), the movement of edges between blocks guarantees that $\mathcal L_u$ is valid after the deletion.
\end{proof}

\begin{lemma}\label{lem:eff_deletion_main}
The total number of recursive calls (and hence re-orientations) of \erd \ due to a deletion of an edge from $G$ is at most $\Delta +1$, and the total time spent is bounded by $O(\Delta )$.
\end{lemma}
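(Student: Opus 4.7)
The plan is to mirror the proof of Lemma~\ref{lem:deletion_main}, since the efficient deletion algorithm \erd{} differs from \rd{} only by a constant amount of extra bookkeeping per recursive step (maintaining the doubly linked list $\mathcal{L}_u$), and the structure of the recursion is otherwise identical.

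\textbf{Bounding the number of recursive calls.} First I would establish the appropriate analogue of Lemma~\ref{lem:violated_edge_deletion_bound} for \erd{}: if the condition in line~4 succeeds with witness $v'$, then at the moment of the test we must have $\dg(v') = \dg(u) + 2$. This follows because $(v',u)$ is violated (so $\dg(v') \ge \dg(u)+2$), and yet the input satisfies Invariant~\ref{inv:intermediate} (so in particular Invariant~\ref{weakest} and the corresponding validity condition bound $\dg(v')$ from above by $\dg(u)+2$ once we account for the single out-degree decrement that occurred in line~1). Now consider a recursive call \erd$(G',(v',u))$ made in line~8, and let $x = \dg(u)$ at the start of the current invocation. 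After the flip installed in line~5, $\dg(u)$ is restored to $x$ while $\dg(v')$ drops by one to $x+1$. Hence the ``active vertex'' passed as the first coordinate of the edge in each recursive call has out-degree exactly one greater than at the previous level. Since no out-degree in $G$ ever exceeds $\Delta$, the recursion depth, and therefore the number of edge reorientations, is at most $\Delta + 1$.

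\textbf{Bounding the total running time.} Each recursive invocation of \erd{}, except the terminating one, performs only $O(1)$ work: a {\sf Delete} on $H_v$, a {\sf ReportMax} on $H_u$, an {\sf Insert} on $H_{v'}$, and a single pointer-swap to replace $(u,v)$ with $(u,v')$ inside $\mathcal{L}_u$. Each of these is $O(1)$ worst-case by Lemma~\ref{lem:structure} together with the fact that $\mathcal{L}_u$ is a doubly linked list on which we have a direct pointer to the edge record. The terminating invocation additionally runs lines 10--13, which scan $N^+(u)$ and call {\sf DecrementCenter}$(H_u)$, together taking $O(\Delta)$ worst-case time, and the removal of $(u,v)$ from $\mathcal{L}_u$ in line~14 costs another $O(1)$. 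Summing $O(1)$ over at most $\Delta + 1$ recursive calls and adding the single $O(\Delta)$ cleanup yields a total of $O(\Delta)$, as claimed.

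\textbf{Main obstacle.} The only subtlety is verifying that the list-maintenance overhead does not inflate any recursive step beyond $O(1)$; specifically, one must check that the replacement of $(u,v)$ by $(u,v')$ in $\mathcal{L}_u$ (line~6) can be done in-place without shifting any other entries, so that the block structure underlying Invariant~\ref{inv:intermediate} remains intact with only $O(1)$ work. This is exactly what Lemma~\ref{lem:eff_valid_graph_deletion_recursive_call} (and its supporting argument about blocks being position-based) guarantees, so the bookkeeping cost per level is genuinely constant and the $O(\Delta)$ total time bound carries through.
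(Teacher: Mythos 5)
Your proof is correct and takes essentially the same route as the paper's, which simply invokes Lemma~\ref{lem:deletion_main} and observes that each recursive step costs $O(1)$ by Lemma~\ref{lem:structure} (plus the $O(1)$ list replacement), with the final step costing $O(\Delta)$ for lines 11--13. One cosmetic slip in your out-degree bookkeeping: line~5 of \erd{} does not change $\dg(v')$ --- at the moment of the line-4 test, $\dg(u)$ has already been decremented to $x-1$ by line~1, so the variant of Lemma~\ref{lem:violated_edge_deletion_bound} already yields $\dg(v')=x+1$ directly, and this value persists unchanged into the recursive call of line~8; your final figure $x+1$ is nonetheless the right one, so the recursion-depth bound of $\Delta+1$ stands.
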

\begin{proof}
For the runtime, notice that just like in Lemma~\ref{lem:deletion_main}, the number of re-orientations during an insertion process is still bounded by $O(\Delta)$. Each recursive step costs $O(1)$ time, except for the last step which costs $O(\Delta)$ time.
\end{proof}

\subsection{Conclusion}

\begin{theorem}\label{thm:simple}
There exists a deterministic algorithm for maintaining an orientation of a fully dynamic
graph on $n$ vertices that has arboricity at most $\alpha$ (at all times),
while supporting the following:
\begin{itemize}[noitemsep,nolistsep]
\item The maximum out-degree is $\Delta  \leq \inf_{\beta>1} \{\beta\cdot \alpha(G) + \log_\beta n\}$,
\item The worst-case time to execute an edge insertion is $O(\alpha \cdot \beta \cdot \Delta)$,
\item The worst-case time to execute an edge deletion is $O(\Delta)$, and
\item The worst-case number of orientations performed per update is $\Delta+1$.
\end{itemize}
\end{theorem}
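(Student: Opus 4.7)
The plan is to observe that Theorem~\ref{thm:simple} is a clean packaging of the analysis carried out throughout Section~\ref{sec:efficient_algo}, so the proof amounts to stitching together the correctness and cost lemmas and invoking Theorem~\ref{lem:invar} to translate the invariant into a bound on $\Delta$. I would structure the proof in four short steps, one per bulleted item in the statement.

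First, I would argue that Invariant~\ref{inv:intermediate} is maintained at all times. The initial graph $G_0$ is empty, so every $\mathcal{L}_w$ is trivially valid and Invariant~\ref{inv:intermediate} holds vacuously. For the inductive step I would invoke Lemma~\ref{lemma:eff_insertion_correct} for insertions and Lemma~\ref{lemma:eff_deletion_correct} for deletions; together they say that if the precondition (Invariant~\ref{inv:intermediate} plus validity of every $\mathcal{L}_w$) holds before an update, it still holds afterwards.

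Second, I would derive the bound on $\Delta$. The key remark is that spectrum-validity is strictly stronger than Invariant~\ref{weakest}: if $w$ is spectrum-valid and $\dg(w) \ge 1$, then the first set $E_w^1$ in its partition contains $\min\{\dg(w), \gamma\}$ edges that are, by definition, $1$-valid, i.e., valid in the sense used by Invariant~\ref{weakest}. Hence Invariant~\ref{inv:intermediate} implies Invariant~\ref{weakest}, and applying Theorem~\ref{lem:invar} directly yields $\Delta \le \inf_{\beta>1}\{\beta \cdot \alpha(G) + \lceil \log_\beta n \rceil\}$.

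Third, the update-time bounds follow verbatim from Lemma~\ref{lem:eff_insertion_main} and Lemma~\ref{lem:eff_deletion_main}: an insertion takes $O(\gamma \cdot \Delta) = O(\alpha \cdot \beta \cdot \Delta)$ worst-case time, and a deletion takes $O(\Delta)$ worst-case time. Likewise, both lemmas bound the number of recursive calls by $\Delta + 1$, and since each recursive call performs at most one edge re-orientation, the number of re-orientations per update is at most $\Delta + 1$. I do not foresee a genuine obstacle here, since all the heavy lifting (analysis of the recursion depth via the appropriate variants of Observation~\ref{obs:violated_edge_insertion_bound} and Lemma~\ref{lem:violated_edge_deletion_bound}, and the per-step cost via Lemma~\ref{lem:structure}) has already been done; the only point worth stating explicitly is that the implication ``spectrum-valid $\Rightarrow$ Invariant~\ref{weakest}'' is what lets us reuse Theorem~\ref{lem:invar} without reproving it for the intermediate invariant.
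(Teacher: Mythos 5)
Your proposal is correct, and it is essentially the argument the paper leaves implicit (Theorem~\ref{thm:simple} is stated without proof as the conclusion of Section~\ref{sec:efficient_algo}). You correctly identify the three ingredients: (i) inductively propagate Invariant~\ref{inv:intermediate} together with validity of all the lists $\mathcal{L}_w$ via Lemmas~\ref{lemma:eff_insertion_correct} and~\ref{lemma:eff_deletion_correct}, (ii) observe that spectrum-validity implies Invariant~\ref{weakest} (the first block $E_w^1$ supplies $\min\{\dg(w),\gamma\}$ edges that are $1$-valid, i.e.\ valid) so Theorem~\ref{lem:invar} applies directly, and (iii) read off the time and re-orientation bounds from Lemmas~\ref{lem:eff_insertion_main} and~\ref{lem:eff_deletion_main}, noting $\gamma=\beta\alpha$. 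The explicit remark that Invariant~\ref{inv:intermediate} $\Rightarrow$ Invariant~\ref{weakest} is exactly the step the paper glosses over with ``stronger than Invariant~\ref{weakest}'' and is worth spelling out, as you did.
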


\section*{Acknowledgments}
The fourth-named author is grateful to Ofer Neiman for helpful discussions.

{\small
\bibliographystyle{alphaurlinit}
\bibliography{tsvi}
}

\appendix

\section{Selected Applications}\label{app:applications}

\paragraph{Maintaining a maximal matching in a fully dynamic graph.}
As part of the growing interest in algorithms for dynamically-changing graphs, a series of recent results show how to maintain a maximal or approximately maximum matching in a fully dynamic graph \cite{IL93,OR10,ABGS12,NS13,GP13}.
In particular, Neiman and Solomon~\cite{NS13} showed recently that one can maintain a maximal matching in $O(\frac{\log n}{\log\log n})$ {\em amortized} time per update for graphs with constant arboricity, using the algorithm of Brodal and Fagerberg~\cite{BF99}. By replacing Brodal and Fagerberg's algorithm with ours, we immediately achieve $O(\log n)$ {\em worst-case} time per insertion or deletion of an edge.

\begin{theorem}
Let $G=(V,E)$ be an undirected fully dynamic graph. Then one can deterministically maintain a maximal matching of $G$ such that the worst-case time per edge insertion or deletion is $O(\alpha(G)\cdot \beta\cdot \Delta )$ and $O(\Delta )$, respectively,
where $\Delta  = \inf_{\beta>1} \{\beta\cdot \alpha(G) + \log_\beta n\}$.
\end{theorem}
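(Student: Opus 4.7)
The plan is to piggyback on Theorem~\ref{thm:simple} and follow the black-box reduction pioneered by Neiman and Solomon~\cite{NS13}, substituting their use of Brodal--Fagerberg's orientation algorithm by ours, so that their amortized bounds become worst-case. I would maintain a matching $M$ satisfying the combinatorial invariant (I): \emph{for every unmatched vertex $u$, every outgoing neighbor of $u$ (in the current orientation) is matched}. Invariant (I) immediately implies that $M$ is maximal: if some edge $(u,v)\in E$ had both endpoints unmatched, the endpoint which is the tail in the current orientation would violate (I).

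As auxiliary data, for each vertex $u$ I store a bit indicating whether $u$ is matched, a pointer to $u$'s partner if so, and a list $U_u$ containing the \emph{unmatched} incoming neighbors of $u$. These lists are maintained as follows: whenever a vertex $w$ flips its matching status, it walks down its $O(\Delta)$-length outgoing-neighbor list (which the orientation structure of Theorem~\ref{thm:simple} already provides) and adds/removes itself from the $U$-list of each outgoing neighbor in $O(1)$ per entry; whenever the orientation algorithm flips an edge $(a,b)$ from $a\to b$ to $b\to a$, $O(1)$ updates suffice (remove $a$ from $U_b$, and, if $b$ is unmatched, add $b$ to $U_a$).

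The matching reacts to user updates as follows. After the user inserts a new edge placed by Theorem~\ref{thm:simple} as $u\to v$, if both endpoints were unmatched I add $(u,v)$ to $M$; otherwise $M$ is untouched. After the user deletes $(u,v)$: if $(u,v)\notin M$, nothing happens; otherwise both $u$ and $v$ become unmatched, and I attempt to rematch each by first scanning its outgoing-neighbor list for an unmatched vertex (cost $O(\Delta)$), and if that fails consulting its $U$-list in $O(1)$. At most a constant number of vertices change matching status per user update, so the $U$-list maintenance costs $O(\Delta)$ in total. The subsequent edge reorientations executed inside the orientation algorithm require no further matching action beyond the $O(1)$ $U$-list bookkeeping described above.

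The step I expect to be the main obstacle is verifying that edge reorientations preserve invariant (I) without any active intervention. Consider a flip of $(a,b)$ from $a\to b$ to $b\to a$; matching statuses are unchanged. Before the flip, (I) applied to $a$ says ``if $a$ is unmatched then $b$ is matched,'' whose contrapositive is exactly the new requirement imposed after the flip, ``if $b$ is unmatched then $a$ is matched''; and for $a$, its outgoing list only shrinks, so (I) is preserved trivially. Once this is in hand, summing costs per user update gives $O(\alpha\beta\Delta)$ for the orientation call plus $O(\Delta)$ for search and $U$-list bookkeeping, totalling $O(\alpha\beta\Delta)$ worst-case per insertion and $O(\Delta)+O(\Delta)=O(\Delta)$ worst-case per deletion, as claimed.
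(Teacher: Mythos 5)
Your proposal is correct and takes the same route as the paper, which simply invokes the Neiman--Solomon reduction and substitutes its orientation subroutine; you have filled in the bookkeeping details accurately. One remark: your invariant (I) applied to a single oriented edge $u\to v$ says ``if $u$ is unmatched then $v$ is matched,'' i.e.\ ``not both endpoints unmatched,'' which is orientation-independent and is just maximality of $M$ --- so the ``main obstacle'' you flag (flips preserve (I)) is automatic, exactly as your contrapositive argument concludes.
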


\paragraph{Adjacency queries.}
Perhaps the most fundamental application for finding a $c$-orientation
of an undirected graph with bounded arboricity is to obtain a (deterministic) data structure that answers adjacency queries quickly .
As mentioned above, once we have a $\Delta$-orientation, we can answer adjacency queries in $O(\Delta)$ time.
This can be improved further if the graph vertices have names (identifiers)
that are distinct and comparable, as using ideas from Kowalik~\cite{Kowalik07} we can store the outgoing edges of every vertex dynamic deterministic dictionary (see~\cite{}), ordered by the names of the vertices. Using this dictionary, we can answer adjacency queries in $O(\log \log \Delta)$ time but the update times suffer an overhead multiplicative factor of $O(\log \log \Delta)$.
For example, as long as $\alpha \leq (\log n)^{O(1)}$,
adjacency queries can be answered in $O(\log \log \log n)$ worst-case time.
Thus, our bounds immediately yield the following result.

\begin{theorem}
Let $G=(V,E)$ be an undirected fully dynamic graph. Then one can answer deterministic adjacency queries on $G$ in $O(\log \log \Delta )$ worst-case time where the deterministic worst-case time per edge insertion or deletion is $O(\alpha(G)\cdot\beta\cdot\Delta \cdot \log \log \Delta)$ and $O(\Delta \cdot \log \log \Delta)$ respectively,
where $\Delta  = \inf_{\beta>1} \{\beta\cdot \alpha(G) + \log_\beta n\}$.
\end{theorem}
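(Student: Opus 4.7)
The plan is to combine Theorem~\ref{thm:simple}, which already maintains a $\Delta$-orientation with the claimed out-degree bound and base update times $O(\alpha\cdot\beta\cdot\Delta)$ and $O(\Delta)$, with an auxiliary dynamic dictionary attached to each vertex that stores its out-neighborhood. For an adjacency query ``is $(u,v)\in E$?'', we check whether $v\in N^+(u)$ or $u\in N^+(v)$; since every edge is oriented in exactly one direction, the edge is present if and only if at least one of these lookups succeeds. Thus the query time equals the dictionary lookup cost, and the update time is the cost from Theorem~\ref{thm:simple} multiplied by the overhead of performing a bounded number of dictionary operations per edge re-orientation.

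First I would attach to each vertex $w$ a dynamic dictionary $D_w$ holding the identifiers of $N^+(w)$ keyed by the vertex names, and observe that $|D_w|\le\Delta$ at all times. Assuming vertex identifiers lie in an ordered integer universe (as in Kowalik~\cite{Kowalik07}), I would implement $D_w$ by a deterministic $y$-fast trie / predecessor structure that supports \textsf{Insert}, \textsf{Delete}, and \textsf{Lookup} in $O(\log\log N)$ worst-case time on $N$ elements. Since $N\le\Delta$, each operation costs $O(\log\log\Delta)$. This immediately yields the query bound: look up $v$ in $D_u$ and $u$ in $D_v$, and report adjacency iff either succeeds.

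Next I would account for the update cost by re-running the algorithms \eri{} and \erd{} from Theorem~\ref{thm:simple} and threading the dictionary operations through them. Each elementary orientation step that the algorithm performs — inserting a new oriented edge $\orient u v$, removing an oriented edge, or flipping an oriented edge — corresponds to a constant number of \textsf{Insert}/\textsf{Delete} operations on the structures $D_u$ and $D_v$, contributing $O(\log\log\Delta)$ per step. Since Theorem~\ref{thm:simple} bounds the total work per edge insertion by $O(\alpha\cdot\beta\cdot\Delta)$ and per edge deletion by $O(\Delta)$ elementary steps, the augmented update times become $O(\alpha\cdot\beta\cdot\Delta\cdot\log\log\Delta)$ and $O(\Delta\cdot\log\log\Delta)$, matching the theorem statement. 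The total space remains $O(n+m)$ because $\sum_w |D_w| = m$.

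The main obstacle is procuring a predecessor/dictionary structure that is simultaneously \emph{deterministic}, \emph{worst-case} $O(\log\log\Delta)$ per operation, and uses $O(|D_w|)$ space — any randomized or amortized alternative would invalidate the claimed worst-case deterministic bounds. This can be resolved by invoking the deterministic dynamic predecessor data structures referenced in Kowalik~\cite{Kowalik07} on vertex identifiers drawn from a polynomially-bounded integer universe. A secondary technicality is to verify that each re-orientation of an edge $(u,v)$ triggers only $O(1)$ dictionary operations (specifically one in $D_u$ and one in $D_v$), so the $\log\log\Delta$ factor enters multiplicatively rather than compounding with the recursion depth; this follows directly from the pseudocode of \eri{} and \erd{}, where every line that touches the orientation of an edge affects exactly one out-neighborhood list.
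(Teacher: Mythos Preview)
Your proposal is correct and follows essentially the same approach as the paper: maintain the $\Delta$-orientation of Theorem~\ref{thm:simple}, store each vertex's out-neighborhood in a deterministic dynamic dictionary keyed by vertex names (as in Kowalik~\cite{Kowalik07}), answer an adjacency query by two lookups, and absorb an $O(\log\log\Delta)$ multiplicative overhead on updates. The paper's own argument is just the brief appendix paragraph preceding the theorem, and your write-up is a faithful (indeed more detailed) elaboration of it.
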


\paragraph{Dynamic matrix by vector multiplication.}
Consider an everywhere-sparse matrix $A=\{a_{ij}\}$ of size $n \times n$ in the sense that every submatrix of $A$ is also sparse. For simplicity assume that $A$ is symmetric.
Suppose we have also an $n$-dimensional vector $\vec{x}$,
and we want to support coordinate-queries to $\vec{y}=A\cdot \vec{x}$,
namely, given a query index $i$, we should report $y_i=\sum_{j=1}^n a_{ij} x_j$.
Now suppose that $A$ and $\vec{x}$ keep changing,
and we want to support such changes (each change updates one entry) quickly,
and still be able to quickly answer coordinate-queries to $\vec{y}$.

To solve this, we can treat the index set $[n]$ as vertices in a graph,
whose adjacency matrix is $A$.
For each vertex $i$, we maintain a weighted summation of the $x$ value's of its incoming neighbors, where the weights are derived from $A$.
When a value in $A$ changes, we write the new weight of the corresponding edge
at the vertex that this edge is oriented into.
When a coordinate of $\vec{x}$ changes,
we updates for the corresponding vertex all its the outgoing neighbors.
Now when a query for $y_ii$ is performed, the corresponding vertex $i$
already has the data from its incoming neighbors, and only needs to gather
data from his outgoing neighbors.
Since every vertex has at most $c$ outgoing neighbors,
we thus establish the following result.

\begin{theorem}
Let $A_{n\times n}$ be a symmetric matrix, and let $G$ be the undirected graph whose adjacency matrix is $A$,
and let $\Delta = \inf_{\beta>1} \{\beta\cdot \alpha(G) + \log_\beta n\}$.
Let $\vec{x}$ be a vector of dimension $n$. Then we can support changes to $A$ in $O(\alpha \cdot \beta \cdot \Delta)$ worst-case time, changes to $\vec{x}$ in $O(\Delta)$ worst-case time, and for coordinate-query $i\in[n]$ we can report $y_i=\sum_{j=1}^n a_{ij} x_j$ in $O(\Delta)$ worst-case time.
This data structure uses space that is linear in $n$ and in the number of non-zeros in $A$.
\end{theorem}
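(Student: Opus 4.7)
The plan is to reduce the dynamic matrix–vector problem to the fully dynamic orientation of Theorem~\ref{thm:simple} and then piggy-back a small amount of per-vertex bookkeeping onto the orientation data structures. Treat the nonzero pattern of $A$ as an undirected graph $G=(V,E)$ on $V=[n]$, where $(i,j)\in E$ iff $a_{ij}\neq 0$; symmetry of $A$ makes $G$ well defined. Maintain an orientation of $G$ of maximum out-degree $\Delta$ using the algorithm of Theorem~\ref{thm:simple}. With each oriented edge $\orient{u}{v}$ store the scalar $a_{uv}$ at the head $v$ (attached to the list entry representing $u$ inside the structures kept at $v$). In addition, each vertex $i$ stores the precomputed partial sum
$$S_i \;\eqdef\; \sum_{j\in N^-(i)} a_{ij}\, x_j.$$
The total space is $O(n + \mathrm{nnz}(A))$, since the orientation machinery is $O(n+m)$ and the weights and partial sums add only $O(1)$ per edge and per vertex.

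Coordinate queries and $\vec{x}$-updates are immediate. For a query at index $i$, return
$y_i = S_i + \sum_{j\in N^+(i)} a_{ij}\,x_j,$
which scans the at most $\Delta$ outgoing neighbors of $i$ and reads the weight stored alongside each, for $O(\Delta)$ worst-case time. When a coordinate $x_j$ changes by $\delta$, the only partial sums affected are the $S_i$ with $\orient{j}{i}$, because the old $x_j$ occurs in $S_i$ precisely in that case; for each such $i$ set $S_i \mathrel{+}= a_{ji}\,\delta$ in $O(1)$ using the stored weight, giving $O(\Delta)$ total.

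Updates to $A$ invoke the orientation algorithm and tack on constant-time bookkeeping per reorientation. If the old and new values of $a_{ij}$ are both nonzero, the edge is already present; look up its orientation (say $\orient{i}{j}$), replace the stored weight at $j$, and update $S_j \mathrel{+}= (a_{ij}^{\mathrm{new}}-a_{ij}^{\mathrm{old}})\,x_i$ in $O(1)$. If the old value is zero, call \eri$(G,(i,j))$ to insert the edge; if the new value is zero, call \erd$(G,(i,j))$ to delete it. Whenever one of these procedures flips an edge $\orient{u}{v}$ to $\orient{v}{u}$, we execute the $O(1)$ fix-up
$$S_v \mathrel{-}= a_{uv}\,x_u,\qquad S_u \mathrel{+}= a_{uv}\,x_v,$$
and move the stored weight $a_{uv}$ from the entry for $u$ in $v$'s structures to the entry for $v$ in $u$'s structures. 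By Theorem~\ref{thm:simple} the number of reorientations per update is at most $\Delta+1$, so this extra work is absorbed into the $O(\alpha\beta\Delta)$ insertion bound and the $O(\Delta)$ deletion bound.

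The only delicate point, and thus the anticipated main obstacle, is ensuring that every reorientation done by \eri/\erd has $O(1)$-time access to both the weight $a_{uv}$ and to $x_u,x_v$ in order to execute the $S$-update before the flip is committed. Storing the weight at the head endpoint, together with the fact that \eri/\erd already manipulate that exact list entry when they flip an edge, makes this local; $x_u$ and $x_v$ are read from a simple length-$n$ array. Granting this local access, the correctness of the maintained invariant $S_i=\sum_{j\in N^-(i)} a_{ij}x_j$ follows by induction on the sequence of reorientations and scalar updates, and the time and space guarantees of the theorem follow directly from the corresponding bounds in Theorem~\ref{thm:simple}.
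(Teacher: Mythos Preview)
Your proposal is correct and follows essentially the same approach as the paper: maintain the orientation of Theorem~\ref{thm:simple}, store at each vertex the weighted partial sum over its incoming neighbors, push updates of $\vec{x}$ along outgoing edges, and answer a query at $i$ by scanning $N^+(i)$. The paper's own argument is only a brief sketch preceding the theorem; your write-up is in fact more careful, since you explicitly separate the three cases for changes to $A$ (nonzero$\to$nonzero, zero$\to$nonzero, nonzero$\to$zero) and spell out the $O(1)$ fix-up of the partial sums $S_u,S_v$ at each of the at most $\Delta+1$ reorientations, which the paper leaves implicit.
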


\paragraph{Dynamic shortest-path queries in planar graphs.}
In a \emph{shortest-path of length at most $k$ query} on vertices $u$ and $v$, we wish to find the shortest path between $u$ and $v$ with total length (number of edges) at most $k$. Kowalik and Kurowski in~\cite{KK06} showed that for constant $k$ one can maintain a fully dynamic planar graph such that answering shortest-path of length at most $k$ queries takes $O(1)$ time, and updates take amortized polylogarithmic time. As part of their algorithm, Kowalik and Kurowski make use of the algorithms of Brodal and Fagerberg~\cite{BF99}, and so replacing those algorithms with ours (together with some straightforward machinery) we obtain the following theorem, with worst-case bounds.

\begin{theorem}
An undirected planar graph $G=(V,E)$ can be maintained through fully dynamic updates such that queries for shortest-path of length at most $k$ can be answered in worst-case $O(\log^{O(k)} n)$ time, and the worst-case update time for both insertions and deletions is $O(\log^{O(k)} n)$.
\end{theorem}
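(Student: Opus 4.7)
The plan is to follow the framework of Kowalik and Kurowski~\cite{KK06}, substituting their use of the amortized Brodal-Fagerberg orientation algorithm with the worst-case algorithm of Theorem~\ref{thm:simple}. First I would note that every planar graph has arboricity $\alpha(G) \le 3$, so applying Theorem~\ref{thm:simple} with the parameter $\beta = 2$ yields a $\Delta$-orientation with $\Delta = O(\log n)$, supporting both insertions and deletions in $O(\log n)$ worst-case time, and performing at most $\Delta + 1 = O(\log n)$ edge reorientations per update. This is the only place where planarity is used; everything downstream only needs a dynamic orientation of worst-case low out-degree with a worst-case bounded number of reorientations.

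Next I would recall the core query idea of~\cite{KK06}: for each vertex $v$, maintain an auxiliary structure recording the set $R_v$ of vertices reachable from $v$ via a directed path of length at most $\lceil k/2 \rceil$ in the current orientation (i.e., traversing outgoing edges). Since the out-degree is bounded by $\Delta$, we have $|R_v| \le \Delta^{\lceil k/2 \rceil} = O(\log^{O(k)} n)$, so $R_v$ can be stored compactly together with the shortest directed-distance label. A shortest $u$--$v$ path of length at most $k$ in $G$ must (after orienting its edges according to the current orientation) have some ``peak'' vertex $w$ that is reachable from both $u$ and $v$ by directed paths of length at most $\lceil k/2 \rceil$; hence answering a query reduces to intersecting $R_u$ and $R_v$ and minimizing the sum of labels, all in $O(\log^{O(k)} n)$ worst-case time.

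The main step is to show that the collection $\{R_v\}_{v \in V}$ can be updated in worst-case $O(\log^{O(k)} n)$ time per edge insertion or deletion. Each update triggers at most $O(\log n)$ reorientations by Theorem~\ref{thm:simple}, and a single reorientation of an edge $(x,y)$ only changes $R_v$ for vertices $v$ that reach $x$ (or $y$) by a directed path of length at most $\lceil k/2 \rceil - 1$; by the out-degree bound there are at most $\Delta^{\lceil k/2 \rceil} = O(\log^{O(k)} n)$ such vertices, and for each of them recomputing $R_v$ from scratch costs $O(\log^{O(k)} n)$. Aggregating over the $O(\log n)$ reorientations per update, the total worst-case cost is still $O(\log^{O(k)} n)$ (absorbing the extra $\log n$ factor into $\log^{O(k)} n$).

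The hard part will be confirming that the auxiliary maintenance of~\cite{KK06} is genuinely worst-case once the orientation subroutine is worst-case, rather than hiding its own amortization. I expect the only source of amortization in~\cite{KK06} to be the orientation algorithm of Brodal and Fagerberg~\cite{BF99}, so substituting our algorithm as a black box yields the claimed bounds; any residual amortization in the $R_v$-maintenance (e.g., occasional table rebuilds when a vertex's $R_v$-size changes class) can be deamortized by the standard two-copy technique and background rebuilding, since the per-update worst-case budget of $O(\log^{O(k)} n)$ strictly dominates the rebuild cost amortized over prior updates. The outcome is exactly the stated theorem: $O(\log^{O(k)} n)$ worst-case query time and $O(\log^{O(k)} n)$ worst-case update time for both insertions and deletions.
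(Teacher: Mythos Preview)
Your approach is exactly the paper's: the paper's entire argument is the one sentence that Kowalik and Kurowski~\cite{KK06} use Brodal--Fagerberg as a black box, so substituting the worst-case orientation of Theorem~\ref{thm:simple} (with $\alpha \le 3$ for planar graphs, $\beta=2$) yields worst-case bounds ``together with some straightforward machinery.'' You go further than the paper by sketching the internals of~\cite{KK06} and the deamortization of the auxiliary structures, neither of which the paper attempts.

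One caution about your sketch: the ``peak'' claim as you state it is false. On the oriented path $u \to a \leftarrow b \to v$ (taken as the entire graph) your definition gives $R_u = \{u,a\}$ and $R_v = \{v\}$, so $R_u \cap R_v = \emptyset$ even though $u$ and $v$ are at undirected distance~$3$; not every short undirected path has a vertex reachable from both endpoints via outgoing-only walks. The actual query mechanism in~\cite{KK06} is more involved than intersecting out-reachability sets. This does not invalidate your overall strategy, since both you and the paper ultimately treat~\cite{KK06} as a black box, but if you intend to turn the sketch into a self-contained argument you will need to revisit that step.
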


\end{document}